\Crefname{equation}{Eq.}{Eqs.}
\Crefname{figure}{Fig.}{Figs.}
\Crefname{tabular}{Tab.}{Tabs.}
\renewcommand{\vec}{\bm}
\newcommand{\eq}[1]{Eq.~\eqref{#1}}
\newcommand{\fig}[1]{Fig.~\ref{#1}}
\newcommand{\sect}[1]{Sec.~\ref{#1}}
\newcommand{\ie}{i.\thinspace{}e. }
\begin{document}

\title{Rapid generation of Mott insulators from arrays of noncondensed atoms}

\author{M. R. Sturm}
\email[Corresponding author.\\]{martin.sturm@physik.tu-darmstadt.de}
\author{M. Schlosser}
\author{G. Birkl}
\author{R. Walser}
\affiliation{Institut f\"ur Angewandte Physik, Technische Universit\"at Darmstadt, 64289 Darmstadt, Germany}
\date{\today}

\begin{abstract}
We theoretically analyze a scheme for a fast adiabatic transfer of cold atoms from the atomic limit of isolated traps to a Mott-insulator close to the superfluid phase. This gives access to the Bose-Hubbard physics without the need of a prior Bose-Einstein condensate. The initial state can be prepared by combining the deterministic assembly of atomic arrays with resolved Raman sideband cooling. In the subsequent transfer the trap depth is reduced significantly. We derive conditions for the adiabaticity of this process and calculate optimal adiabatic ramp shapes. Using available experimental parameters, we estimate the impact of heating due to photon scattering and compute the fidelity of the transfer scheme. Finally, we discuss the  particle number scaling behavior of the method for preparing low-entropy states. Our findings demonstrate the feasibility of the proposed scheme with state-of-the-art technology.
\end{abstract}

\maketitle

\section{Introduction}
Deterministic preparation of cold atoms in optical microtrap arrays \cite{Endres2016, Barredo2016, Kim2016, Robens2017a, OhldeMello} combined with Raman-sideband cooling \cite{Kaufman2012, Thompson2013} constitutes a promising source for low-entropy many-body states \cite{Olshanii2002, Weiss2004}. This approach assembles quantum many-body systems atom-by-atom, contrasting the loading schemes used in optical lattice experiments which start from the bulk, \ie Bose-Einstein condensates, or degenerate Fermi gases \cite{Bloch2008, Lewenstein2012}. The assembly of atomic arrays with unit filling and the Raman sideband cooling of the atoms to the respective motional ground state require tight isolated traps, which prohibit inter-site tunneling. Therefore, after the cooling process the trap depth or the trap spacing needs to be reduced significantly, in order to explore the itinerant physics of the Hubbard model. This was demonstrated in double wells for bosonic atom pairs \cite{Kaufman2014} and for fermionic atom pairs \cite{Murmann2015} using the \textquote{spilling technique} \cite{Serwane2011} instead of Raman sideband cooling.

In this article, we investigate the time-dependent transfer of bosonic atoms from an array of isolated traps to a tunnel-coupled array. A detailed analysis shows that this bottom-up approach to a Mott-insulator state is achievable. Reducing the trap depth instead of the trap spacing is preferred, because the latter results in a large overlap of the optical microtraps prohibiting cross-talk-free single-site control \cite{Sturm2017}. Clearly, the time-dependent transfer has to be \textquote{as fast as possible, but as slow as necessary}, to avoid ramp-induced excitations on one hand, and to suppress external heating mechanisms or loss processes, on the other hand. In order to satisfy these conflicting conditions, we derive optimal intensity ramp shapes.

The article is organized as follows: In \sect{sec:adiab}, we formulate an adiabatic variational procedure for optimal time-dependent parameter ramps. In Sec.~\ref{sec:model}, we set up the model for ultracold atoms in optical microtraps, discuss the regimes traversed during the transfer process, and apply the formalism developed in \sect{sec:adiab}. Current experiments with optically trapped atoms are used as benchmarks to obtain realistic system parameters in \sect{sec:example}. Employing these results, we compute an optimal adiabatic ramp in \sect{sec:mott}. For this ramp we estimate the impact of heating due to light scattering and compute the transfer fidelity by solving the time-dependent Schrödinger equation for the one-dimensional Bose-Hubbard model. The particle number scaling behavior of the procedure is discussed in \sect{sec:scale}. Finally, in \sect{sec:conclusions} we summarize our findings and provide an outlook.

\section{Rapid adiabatic parameter ramps}
\label{sec:adiab}

Time-dependent manipulations of atom traps have to be sufficiently slow to avoid  excitations. Therefore, one has to specify the conditions of adiabaticity and define error measures for time-dependent transfer processes. 

Let us consider a quantum system with Hamilton operator $\hat{H}(\gamma)$, which is controlled by a $\ell$-dimensional time-dependent parameter $\gamma(t)$  within a time interval $\tau$. Its instantaneous 
energies $E_i(\gamma)$ and eigenstates $\ket{i(\gamma)}$ are obtained from the stationary Schr\"odinger equation
\begin{align}
\hat{H}(\gamma)\ket{i(\gamma)}&=E_i(\gamma)\ket{i(\gamma)}.
\end{align}
The adiabatic theorem \cite{Born1928,Messiah1961} states that systems prepared initially in the energy eigenstate $\ket{i(\gamma(0))}$ will remain in $\ket{i(\gamma(t))}$, if the rate of change of the parameters $\gamma$ is sufficiently small and the energy levels $E_i(\gamma)$ are well separated. In absence of induced resonant transitions, a sufficient criterion \cite{Marzlin2004, Amin2009} for adiabaticity is given by 
\begin{equation}
\label{eq:adiabaticCriterion}
\max_{0\leq t\leq  \tau} \bigg| \frac{\alpha_{ij} (
\gamma,\dot{\gamma})}{\hbar \omega^2_{ij} (\gamma)} \bigg|^2 \ll 1, \quad 
\forall j \neq i.
\end{equation}
Here, we have introduced the transition frequencies
\begin{equation}
\omega_{ij}(\gamma) = \frac{E_j(\gamma) - E_i(\gamma)}{\hbar}
\end{equation}
and the transition matrix elements
\begin{align}
\label{rotspeed}
\alpha_{ij}(\gamma,\dot{\gamma}) =\bra{j} \partial_{t} \hat{H} \ket{i}
=\sum_{l=1}^\ell \dot{\gamma}_l
\bra{j} \partial_{\gamma_l} \hat{H} 
\ket{i}.
\end{align}
Based on measuring the instantaneous loss out of the state $\ket{i}$ into any other state $\ket{j}$ by
\begin{align}
\label{eq:adiasum}
\mathcal{L}(\gamma,\dot{\gamma}) &= \sum\limits_{j \neq i} 
\bigg|\frac{\alpha_{ij}(\gamma,\dot{\gamma})}{\hbar \omega^2_{ij}(\gamma)}
\bigg|^2,
\end{align}
one can express the  cumulative adiabatic error as
\begin{align}
\label{eq:measureinf}
\mathds{E}_{\infty}[\gamma,\dot{\gamma}]&=
\max_{0\leq t\leq  \tau} \mathcal{L}(\gamma(t),\dot{\gamma}(t)),
\end{align}
within the interval $[0,\tau]$. The smallness of $\mathds{E}_{\infty}$ defines an optimality criterion for adiabaticity (cf. \eq{eq:adiabaticCriterion}) for a time-dependent process $\gamma(t)$, starting from $\gamma (0)$ and reaching $\gamma (\tau)$ within duration $\tau$. 

Alternatively, the time-averaged functional
\begin{align}
\label{eq:measure1}
\mathds{E}_1[\gamma,\dot{\gamma}]&= 
\frac{1}{\tau} \int_0^\tau\mathcal{L}
(\gamma(t),\dot{\gamma}(t)) \; \dd t,
\end{align}
is also a cumulative measure for the non-adiabaticity of the process. Here, the error measures $\mathds{E}_p$ are analogs of $p$-norms $\lVert \vec{x}\rVert_p=\sqrt[p]{\sum_i |x_i|^p}$ for 
finite dimensional vectors $\vec{x}$. Clearly, the definition of \eq{eq:measure1} is more amenable to extremization using variational analysis than the definition of \eq{eq:measureinf}. In Appendix \ref{sec:proof}, we show that for the case of a one-dimensional parameter function, as considered in this manuscript, a parameter curve, which minimizes $\mathds{E}_1$ also minimizes $\mathds{E}_\infty$.

By considering the structure of $\mathcal{L}$ in Eqs. \eqref{rotspeed} and \eqref{eq:adiasum}, one obtains a quadratic form in terms of the velocities $\dot{\gamma}$, 
\begin{align}
\label{eq:adiasumquad}
\mathcal{L}(\gamma,\dot{\gamma}) &=
\sum_{k,l=1}^{\ell}{\frac{1}{2}\dot{\gamma}_k\mathcal{M}_{kl}(\gamma)\dot{
\gamma}_l}
\end{align}
and a symmetric, parameter-dependent 'mass matrix' $\mathcal{M}(\gamma)$ in close analogy to the Lagrangian mechanics. Optimal trajectories $\gamma$ are obtained from the Euler-Lagrange equations
\begin{equation}
\label{euler}
\frac{\text{d}}{\text{d}t} \partial_{\dot{\gamma}_l} \mathcal{L}= 
\partial_{\gamma_l}\mathcal{L}.
\end{equation}
Clearly, we can also introduce a canonical momentum $\pi_i=\partial_{\dot{\gamma}_i}\mathcal{L}= (\mathcal{M} \dot{\gamma})_i$ and obtain a Hamiltonian function
\begin{align}
\label{eq:Hamiltonian_function}
\mathcal{H}(\gamma,\pi)=
\sum_{l=1}^{\ell}{\pi_l\dot
{\gamma}_l-\mathcal{L}}=
\sum_{k,l=1}^{\ell}{\frac{1}{2}\pi_k\mathcal{M}_{kl}^{-1}(\gamma)\pi_l},
\end{align}
via a Legendre transformation. From \eq{eq:Hamiltonian_function} Hamilton's equation of motion can be derived as
\begin{align}
\dot{\gamma}_l&=\partial_{\pi_l}{\mathcal{H}}=(\mathcal{M}^{-1}(\gamma)\pi
)_l, &
\dot{\pi}_l&=-\partial_{\gamma_l}{\mathcal{H}}.
\end{align}
If the system is not subject to any additional external time dependence, then the Hamiltonian function is constant
\begin{align}
\mathcal{H}(\gamma(t),\pi(t))=\mathcal{H}_0.
\end{align}
In the special case of one-dimensional parameter processes $\ell=1$, which is considered in \sect{sec:mott}, this leads to completely integrable dynamics 
\begin{align}
\label{integral}
\int_{\gamma(0)}^{\gamma(t)}{\text{d}\gamma\sqrt{\mathcal{M}(\gamma)}}=
\pm\sqrt{2 \mathcal{H}_0} t
\end{align}
for the optimal adiabatic process $\gamma(t)$.

Our approach is equivalent to the concept of the 'quantum adiabatic brachistochrone' \cite{Rezakhani2009} and is strongly related to constant adiabaticity pulses used in nuclear magnetic resonance \cite{Baum1985}. 

\section{Cold atoms in optical microtraps}
\label{sec:model}

The physics of dilute atomic gases is determined by the interplay of single particle motion in the parameter-dependent external potential $V(\vec{r},\gamma)$ and internal pressure arising from the van-der-Waals interaction \cite{Pethick2002, Proukakis2013, Gardiner2017}. In the $s$-wave limit, the latter can be described by a contact interaction of strength $g=4 \pi \hbar^2 a_s/m$, with the atomic mass $m$ and the scattering length $a_s$. Therefore, the system's Hamilton operator reads
\begin{equation}
\label{eq:many_body_H}
\begin{split}
\hat{H}(\gamma) = &\int  \hat{\Psi}^\dagger(\vec r) H_\text{sp} (\vec{r},\gamma) \hat{\Psi}(\vec r) \; \dd^3 r\\
&+\frac{g}{2} \int  \hat{\Psi}^\dagger(\vec r) \hat{\Psi}^\dagger(\vec r) \hat{\Psi}(\vec r) \hat{\Psi}(\vec r) \; \dd^3 r,
\end{split}
\end{equation}
with the position representation of the single particle Hamilton operator
\begin{equation}
H_\text{sp} (\vec{r},\gamma) = -\frac{\hbar^2}{2m} \nabla^2 + V(\vec{r},\gamma).
\end{equation}
As we consider ultra-cold bosonic atoms the field operator $\hat{\Psi} (\vec{r})$ obeys $[\hat{\Psi}(\vec{r}), \hat{\Psi}^\dagger(\vec{r}')] = \delta (\vec{r}-\vec{r}')$. For arrays of deep traps, it is convenient to expand $\hat{\Psi} (\vec{r})$ using orthogonal atomic orbitals which are localized around the trap minima. For regular lattices the natural choice are Wannier functions $w_{i}^n (\vec{r},\gamma)$ for the $i^\text{th}$ trap site and the $n^\text{th}$ band \cite{Kohn1959, Walters2013, Pichler2013} with the 
corresponding quantized amplitudes
\begin{equation}
\label{eq:decomposition}
\hat{a}_{i}^n(\gamma) = 
\int w_i^{n} (\vec{r},\gamma) \hat{\Psi} (\vec{r}) \; \dd^3 r.
\end{equation}
In order to have a compact notation, we suppress the parameter-dependence if unambiguous. From Eqs. \eqref{eq:many_body_H} and \eqref{eq:decomposition} one obtains the multi-band Bose-Hubbard Hamilton operator \cite{Dutta2015}
\begin{equation}
\label{eq:multibandBHM}
\begin{split}
\hat{H}(\gamma) = 
&\sum\limits_{n, i} \epsilon_i^n (\gamma) 
\hat{a}_{i}^{n \dagger} \hat{a}_{i}^{n \phantom{\dagger}}
- \sum\limits_{n, i \neq j} 
J_{ij}^{n}(\gamma) 
\hat{a}_{i}^{n \dagger} \hat{a}_{j}^{n \phantom{\dagger}}\\
& + \frac{1}{2} \sum\limits_{nopq} \sum\limits_{ijkl} 
U_{ijkl}^{nopq} (\gamma) 
\hat{a}_{i}^{n \dagger} \hat{a}_{j}^{o \dagger} 
\hat{a}_{k}^{p \phantom{\dagger}} \hat{a}_{l}^{q \phantom{\dagger}},
\end{split}
\end{equation}
with on-site energies $\epsilon_i^n$, tunneling parameters $J_{ij}^{n}$, and interaction strengths $U_{ijkl}^{nopq}$ given by
\begin{align}
\epsilon_i^n (\gamma) &= \int w_i^{n} (\vec{r}) H_\text{sp}(\vec{r},\gamma) w_i^{n } (\vec{r}) \; \dd^3r,\\
\label{eq:J}
J_{ij}^n (\gamma) &= -\int w_i^{n} (\vec{r}) H_\text{sp}(\vec{r},\gamma) w_j^{n } (\vec{r}) \; \dd^3r,\\
U_{ijkl}^{nopq} (\gamma) &= g \int w_i^{n} (\vec{r})  w_j^{o} (\vec{r})  w_k^{p} (\vec{r})  w_l^{q} (\vec{r}) \; \dd^3 r.
\label{eq:U}
\end{align}

An arbitrary state $\ket{\psi}$ can be expanded in the Fock basis
\begin{align}
\label{eq:occupancy1}
\ket{\psi} &= \sum\limits_{|\vec{\eta}|=N}
\psi_{\vec{\eta}} \ket{\vec{\eta}},
&\vec{\eta}= ( \eta_1^0, \ldots ,\eta_M^0, \eta_1^1, \eta_2^1, 
\ldots).
\end{align}
Here, $\eta_i^n\in \mathds{N}_0$ is the occupation of the Wannier mode corresponding to the $n^\text{th}$ band and the $i^\text{th}$ site. The set $\mathds{N}_0$ includes the natural numbers and zero. The occupations $\eta_i^n$ are constrained to the number of atoms \mbox{$N=|\vec{\eta}| \equiv \sum_{n,i} \eta^n_i$}.

\subsection{Atomic limit}
\label{sec:adiabHO}

\begin{figure}
	\centering
	\includegraphics[scale=1]{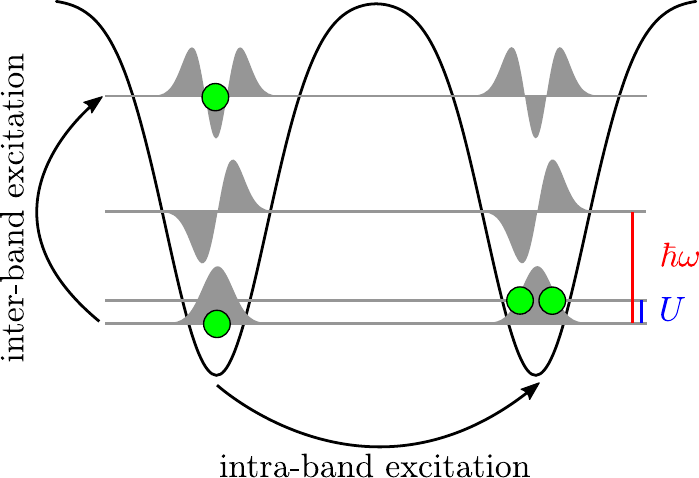}
	\caption{Excitation pathways in a microtrap array:	inter-band excitations dominate in deep traps since intra-band tunneling is exponentially suppressed. However, for shallower potentials	intra-band tunneling prevails, as long as the two-particle interaction energy $U \equiv U_{iiii}^{0000}$ remains smaller than the band gap $\hbar \omega$.}
	\label{fig:excitations}
\end{figure}

The transfer process starts from an array of tight isolated traps with one atom per site prepared in the respective motional ground state. The corresponding many-body state is given by $\ket{g_\text{al}} = 
\ket{\vec{\eta}}$ with $\eta_i^n = \delta_{0n}$. This regime is called the \emph{atomic limit}, where inter-site tunneling is strongly suppressed. Therefore, the only possible reaction of the system to time-dependent modulations of the trap depth are local inter-band excitations (cf. \fig{fig:excitations}) resulting in states of the form $\hat{a}_i^{n\dagger} \hat{a}_i^0 \ket{g_\text{al}}$. 

Due to the tight confinement of the atoms around the respective potential minima, each trap can be described by a harmonic oscillator. The corresponding frequencies $(\Omega_{x}(t), \Omega_{y}(t), \Omega_{z}(t)) = \gamma(t)$ are the control parameter for the adiabatic transfer procedure. The multi-band Bose-Hubbard Hamilton operator of \eq{eq:multibandBHM} reduces to the sum of local harmonic oscillators
\begin{align}
\hat{H}_\text{al}(\gamma) &= \sum\limits_{n,i} \epsilon_i^{n}(\gamma) \; \hat{a}_{i}^{n \dagger}  
\hat{a}_{i}^{n \phantom{\dagger}},\\
\epsilon_i^{n}(\gamma) &= \hbar 
\sum\limits_{l=x,y,z} (n_l+\tfrac{1}{2}) \Omega_l.
\end{align}
If we introduce local Cartesian coordinates $\vec{\xi}=\vec{r}-\vec{R}_i$ around the trap minimum $\vec{R}_i$ of the $i^\text{th}$ site, then the Wannier function 
\begin{align}
w_i^{n} (\vec{r})=w^{n} (\vec{\xi})= (\vec{\xi} |n_x n_y n_z ),
\end{align}
factorizes into one-dimensional harmonic oscillator states
\begin{align}
\label{eq:wannierHO}
( \xi_l|n_l ) &= \frac{e^{-\frac{\xi_l^2}{2 a_l^2}}}{\sqrt[4]{ \pi (2^{n_l} n_l! a_l)^2}}
H_{n_l} \left(\tfrac{\xi_l}{a_l}\right) .
\end{align}
Here, $a_l=\sqrt{\hbar/(m \Omega_l)}$ denote the three oscillator lengths, $n = (n_x,n_y,n_z) \in \mathbb{N}_0^3$ are the motional quantum numbers, and $H_m$ is the $m^\text{th}$ Hermite polynomial.

In order to determine the adiabatic Lagrangian function $\mathcal{L}_\text{al}(\gamma,\dot{\gamma})$ from \eq{eq:adiasum}, parameter derivatives of the form
\begin{equation}
	\label{eq:H_derivative}
	\frac{\partial \hat{H}_{\text{al}}}{\partial \gamma_l} = 
	\sum\limits_{n,i} \frac{\partial \epsilon_i^n}{\partial \gamma_l} \hat{a}_{i}^{n \dagger}  
	\hat{a}_{i}^{n \phantom{\dagger}} 
	+ \epsilon_i^n \bigg( \frac{\partial \hat{a}^{n\dagger}_{i}}{\partial \gamma_l} 
	\hat{a}_{i}^n + \hat{a}^{n\dagger}_{i} \frac{\partial \hat{a}_{i}^n}{\partial \gamma_l} \bigg)
\end{equation}
need to be calculated. The derivatives of the operators $\hat{a}_{i}^n$ can be found from \eq{eq:decomposition} \cite{Pichler2013, Lacki2013}
\begin{align}
\frac{\partial \hat{a}_{i}^n}{\partial {\gamma_l} } &= \sum\limits_{p,j} C_{ij;l}^{np} \; \hat{a}_{j}^p,&
C_{ij;l}^{np} &= \int \frac{\partial w_i^{n} (\vec{r})}{\partial \gamma_l} w_j^{p} (\vec{r}) \; \dd^3r.
\label{eq:a_derivative}
\end{align}
The coefficients $C_{ij;l}^{np}$ can be interpreted geometrically as the generators of a basis-rotation and satisfy the relation $C_{ij;l}^{np}=-{C_{ji;l}^{pn}}$. Using the harmonic approximation for the Wannier functions given in \eq{eq:wannierHO}, we obtain
\begin{align}
C_{ii;l}^{n0} &
= \frac{\delta_{n_l 2}}{\sqrt{8} \gamma_l} 
\prod\limits_{l' \neq l}^\ell
 \delta_{n_{l'} 0}.
\end{align}
The calculation of the transition amplitudes defined in \eq{rotspeed} requires evaluation of  the matrix-element between the ground and excited state. Using Eqs.~\eqref{eq:H_derivative} and \eqref{eq:a_derivative} we find
\begin{align} 
\bra{g_\text{al}} \hat{a}_i^{0\dagger} \hat{a}_i^n
\partial_{\gamma_l} \hat{H}_\text{al} \ket{g_\text{al}}
&= (\epsilon_i^n -\epsilon_i^0) 
C_{ii;l}^{n0},
\end{align}
yielding
\begin{align} 
\alpha_{ii}^{0n} 
&=\sum_{l=1}^3
\dot{\gamma}_l (\epsilon_i^n -\epsilon_i^0) 
C_{ii;l}^{n0}.
\end{align}
The energy of inter-band excitations 
\begin{equation}
\label{eq:energiesHO}
\hbar \omega_{i}^{n} (\gamma) = 
\epsilon_i^n - \epsilon_i^0=
\hbar \sum\limits_{l=1}^3 n_l \gamma_l,
\end{equation}
can be inferred from the harmonic oscillator level spacing. Finally, by summing over all excited states, we determine the adiabatic error Lagrangian function in the atomic limit
\begin{align}
\label{eq:adiadeep}
\mathcal{L}_\text{al}(\gamma,\dot{\gamma}) &
=\sum\limits_{l=1}^3
\frac{1}{2}
\mathcal{M}_\text{al}(\gamma_l) {\dot{\gamma}_l}^2,
\end{align}
with the extensive mass-function 
$\mathcal{M}_\text{al}(\gamma_l)= M (2\gamma_l)^{-4}$ and the number of sites $M$.

Fortunately, $\mathcal{L}_\text{al}$ is separable. Due to the integrability condition of \eq{integral}, we obtain the optimal adiabatic ramp with the well-known hyperbolic shape \cite{Chen2010} 
\begin{equation}
\label{eq:horamp}
\gamma_l^{-1}(t) =
\Omega_{l0}^{-1} + 
\left(\Omega_{l\tau}^{-1} -\Omega_{l0}^{-1}\right) 
\frac{t}{\tau},
\end{equation}
for the transfer of trapped particles from an initial trap at $t=0$ with $\gamma (0) = (\Omega_{x0},\Omega_{y0},\Omega_{z0})$ to a final trap at $\tau$ with 
$\gamma (\tau) = (\Omega_{x\tau},\Omega_{y\tau},\Omega_{z\tau})$. 
The quantitative measure for residual excitations 
\begin{equation}
\mathds{E}_\infty^\text{al} [\gamma , \dot{\gamma}] =
\frac{M}{32 \tau^2}\sum_{l=x,y,z} 
{(\Omega_{l\tau}^{-1}-\Omega_{l0}^{-1})^2}
\label{eq:EinfAL}
\end{equation}
is inversely proportional to the square of the ramp duration $\tau$.

In an experiment the trap frequencies are determined by the optical potential. Therefore, the actual control parameter is the trap depth. In Sec.~\ref{sec:derivedparam} the relations between the trap frequencies and the trap depth are derived for realistic system parameters obtained from experiments. 

\subsection{Mott insulator}
\label{adiabMI}

For shallower traps, one obtains an itinerant many-body state. In this regime intra-band excitations due to tunneling between adjacent traps (cf. \fig{fig:excitations}) are energetically favored over inter-band excitations. We assume that the initial cooling process was efficient and the preceding adiabatic transfer has 
not populated higher bands. Therefore, we restrict the following analysis to the lowest band. We further assume sufficiently deep traps such that only nearest-neighbor tunneling and on-site interactions need to be considered and that the trap array is homogeneous $\epsilon\equiv\epsilon_i^0, J\equiv J_{i,i+1}^0,
U\equiv U^{0000}_{iiii}$. In this case, the single-band Bose-Hubbard model \cite{Hubbard1963, Fisher1989, Jaksch1998, Bloch2008, Lewenstein2012}
\begin{align}
\label{eq:HJ}
\begin{split}
\hat{H}_\text{bh}(\gamma)  &= 
\epsilon N
-J \sum\limits_{\langle i,j\rangle} 
\hat{a}_i^{0 \dagger} \hat{a}_j^{0\phantom{\dagger}}
+\frac{U}{2} \sum\limits_i 
\hat{a}^{0 \dagger}_i 
\hat{a}_i^{0 \dagger}
\hat{a}_i^{0\phantom \dagger} 
\hat{a}_i^{0\phantom \dagger}
\end{split}
\end{align}
emerges from \eq{eq:multibandBHM}. The notation $\langle i,j \rangle$ indicates a summation over 
nearest-neighbor pairs of traps. The relevant control parameter is $\gamma = (J,U)$, since the on-site single-particle energy $\epsilon$ results only in a constant energy offset.

In order to evaluate the adiabatic Lagrangian function from \eq{eq:adiasum}, one needs to find the energy eigenstates of $\hat{H}_\text{bh}$. For $U \gg J$, this can be done perturbatively starting from the ground state in the atomic limit $\ket{g_\text{al}}$ \cite{Freericks1994}. In the Mott-insulator phase, low lying excited states  $\ket{p,q} = \hat{a}_p^{0\dagger} \hat{a}_q^0 \ket{g_\text{al}}/\sqrt{2}$, transport an atom from site $q$ to an occupied site $p \neq q$. These transitions are called particle-hole or intra-band excitations (cf. \fig{fig:excitations}). To first order in perturbation theory, the ground state reads
\begin{equation}
\ket{g_\text{bh}} = \ket{g_\text{al}} + \frac{\sqrt{2}J}{U} \sum\limits_{
\langle p,q \rangle} \ket{p,q} +\mathcal{O} (\tfrac{J^2}{U^2}).
\end{equation}
The energy corresponding to a particle-hole excitation is given by
\begin{align}
\hbar \omega_{pq} &= \bra{p,q} \hat{H}_\text{bh} \ket{p,q} - \bra{g_\text{bh}} \hat{H}_\text{bh} \ket{g_\text{bh}}\\
 &= U + \mathcal{O} (\tfrac{J^2}{U^2}).
\label{eq:Hubbard_freq}
\end{align}
The transition matrix elements can be calculated from equation \eq{rotspeed} 
yielding
\begin{align}
\begin{split}
\alpha_{pq} &= \dot{U} \bra{p,q} \partial_U \hat{H}_\text{bh}\ket{g_\text{bh
}} + \dot{J} \bra{p,q} \partial_J \hat{H}_\text{bh}\ket{g_\text{bh}}\\ 
&= 
-\sqrt{2} U \partial_t \left(\frac{J}{U}\right)  + \mathcal{O} (\tfrac{J^2}{U^
2}).
\end{split}
\label{eq:Hubbard_alpha}
\end{align}
It is worth noting that a change in the parameters $U$ and $J$ is connected to a change in the Wannier functions. Therefore, the derivative of the operators $\hat{a}_i$ with respect to $U$ and $J$ need to be considered. However, terms connected to these derivatives are neglected in \eq{eq:Hubbard_alpha} since they do not induce intra-band excitations \cite{Lacki2013a, Pichler2013}.

From Eqs. \eqref{eq:Hubbard_freq} and \eqref{eq:Hubbard_alpha} the adiabatic functional on a two-dimensional parameter space 
$\gamma=(J,U)$ can be derived
\begin{align}
\mathcal{L}_\text{bh}(\gamma,\dot{\gamma}) 
&= \frac{1}{2}\sum_{k,l=1}^{2}{\dot{\gamma}_k}\mathcal{M}_{kl}(\gamma)
\dot{\gamma}_l+\mathcal{O} (\tfrac{J^3}{U^3}),\\
 \mathcal{M}&=
\frac{4M z \hbar^2}{U^6}
\begin{pmatrix}
	U^2&-J U\\
	-J U & J^2
\end{pmatrix},
\end{align}
with $z$ being the average number of nearest-neighbor sites, commonly called coordination number. 

In experiments \cite{Bloch2008}, the on-site interaction strength $U(t)=U(\mathcal{V}(t))$ and the tunneling parameter $J(t)=J(\mathcal{V}(t))$ are not independent variables but functionally depend on the depth of the optical potential $\mathcal{V}(t)$. This is described in Sec.~\ref{sec:derivedparam}. Therefore, we obtain a one-dimensional parameter curve $\gamma(t)=\mathcal{V}(t)$ and adiabatic Lagrangian function
\begin{align}
\label{eq:Lbh}
\mathcal{L}_\text{bh}(\gamma,\dot{\gamma})&=
\frac{1}{2}\mathcal{M}_\text{bh}(\gamma)\dot{\gamma}^2,\\
\mathcal{M}_\text{bh} (\gamma) &=\frac{4 M z \hbar^2}{U^2(\gamma)}
\left[\partial_\gamma \left(\frac{J(\gamma)}{U(\gamma)}\right)\right]^2,
\label{eq:Mbh}
\end{align}
with a well-defined positive mass function $\mathcal{M}_\text{bh}(\gamma)>0$.

\subsection{Additivity of errors}
Clearly, the transfer crosses from the atomic limit to the Mott insulator limit. The full description of the dynamics between the two extreme limits is very complex, because inter- as well as intraband excitation become relevant simultaneously. 
Therefore, we propose as an approximate measure for the 
instantaneous adiabaticity of the transfer process 
the sum of errors
\begin{equation}
\label{eq:additivity}
\mathcal{L} (\gamma, \dot{\gamma}) = \mathcal{L}_\text{al} (\gamma, \dot{\gamma}) + \mathcal{L}_\text{bh} (\gamma, \dot{\gamma}).
\end{equation}
This additivity of errors follows directly from \eq{eq:adiasum}, which measures the error $\mathcal{L}$ by a sum of squares. However, we have approximated the individual terms by using the limiting expression derived in previous sections.

\section{Realistic experimental setting}
\label{sec:example}

In this section we discuss details of an implementation based on recent experiments. From this we determine a realistic set of experimental parameters and derive relations for trap frequencies, interaction strengths, and tunneling parameters.

\subsection{Optical potential}
\label{sec:realisticp}

There are multiple techniques to generate arrays of optical microtraps. Among these are acousto-optic deflectors (AODs) \cite{Kaufman2014, Lester2015, Endres2016}, spatial light modulators (SLMs) \cite{Nogrette2014, Barredo2016}, and microlens arrays (MLAs) \cite{Birkl2001, Schlosser2011, Sturm2017}. Here, we make no assumptions about the used approach. However, we presume that the microtraps have an approximately Gaussian shape with a waist of $w_0$~=~\unit{0.71}{\micro m} and are generated by linearly polarized light with a wavelength of $\lambda_\bot$~=~\unit{852}{nm} as in \cite{Kaufman2014}. Further, we consider the species $^{87}$Rb, which is the workhorse for the field of ultracold atoms and has been used in most of the experiments relevant to this work, e.g. \cite{Kaufman2012, Kaufman2014, Lester2015, Endres2016, Barredo2016}. We assume that the atoms are prepared in the state $\ket{5^2 S_{1/2},F=2,m_F=2}$ as they were in \cite{Kaufman2012, Kaufman2014}. In \cite{Lester2015} the setup from \cite{Kaufman2014} has been used to generated a $2 \times 2$ optical tweezer array with one atom per trap. The minimal trap spacing that allows for a high preparation efficiency of $90 \%$ has been determined to $d$~=~\unit{1.7}{\micro m}. For this trap spacing the overlap of adjacent traps is negligible, which facilitates cross-talk free single-site control over the optical potential \cite{Sturm2017}.
\begin{table}
\centering
\begin{ruledtabular}
\begin{tabular}{l c r}
Quantity & Symbol & Value \\
\hline
Atomic mass of $^{87}$Rb & $m$ & \unit{86.9}{u}\\
Scattering length & $a_s$ & \unit{5.24}{nm}\\
Energy scale & $\mathcal{E}$ & $\unit{38.1}{nK} \cdot k_B$\\
Wavelength for $V_\bot$ & $\lambda_\bot$ & \unit{852}{nm} \\
Wavelength for $V_\parallel$ & $\lambda_\parallel$ & \unit{1064}{nm}\\
Trap spacing & $d$ & \unit{1700}{\nano m}\\
Trap waist & $w_0$ & \unit{710}{\nano m}\\
Inclination angle & $\theta$ & \unit{24.6}{\degree}\\
Initial depth of $V_\bot$ & $\mathcal{V}_\bot(0)$ & $\unit{1}{mK} \cdot k_B$\\
Final depth of $V_\bot$ & $\mathcal{V}_\bot(\tau)$ & $\unit{158}{nK} \cdot k_B$\\
Initial depth of $V_\parallel$ &$\mathcal{V}_\parallel(0)$ & $\unit{2.5}{mK} \cdot k_B$\\
Final depth of $V_\parallel$ & $\mathcal{V}_\parallel(\tau)$ & $\unit{395}{nK} \cdot k_B$
\end{tabular}
\end{ruledtabular}
\caption{Experimental parameters used for obtaining realistic estimates for the adiabatic transfer procedure.}
\label{table:parameters}
\end{table}

For the experiments in \cite{Kaufman2012, Thompson2013, Kaufman2014} the cooling efficiency in axial direction was considerably lower than in the transverse direction. This results from weaker confinement in the axial direction. The effect can be compensated by additional axial confinement. Further, this prevents atoms from tunneling to diffraction patterns along the optical axis that exist if the trap array is generated by a MLA or a SLM (cf. the Talbot effect). Therefore, we consider axial confinement implemented by a standing wave which is produced by two laser beams with a wavelength $\lambda_\parallel$~=~\unit{1064}{nm} that enclose an angle of $\theta = 24.6^\circ$. This results in a spacing of \unit{2.5}{\micro m} between the antinodes of the optical potential which is large enough to prohibit tunneling in axial direction for the considered potential depths. The total optical potential reads
\begin{equation}
V(\vec{r},t) = V_\bot (\vec{r},t) + V_\parallel (\vec{r},t)
\end{equation}
with the optical microtrap array potential
\begin{equation}
\label{eq:mtarray}
V_\bot(\vec{r},t) \approx - \sum\limits_{i=1}^N \mathcal{V}_\bot (t) e^{-2 \tfrac{(x-X_i)^2+(y-Y_i)^2}{w_0^2}}
\end{equation}
and the standing wave potential for axial confinement
\begin{equation}
\label{eq:standingwave}
V_\parallel (\vec{r},t) \approx - \mathcal{V}_\parallel (t) \cos^2 (\kappa z ).
\end{equation}
Here, we introduced the potential depths $\mathcal{V}_\bot$ and $\mathcal{V}_\parallel$ as well as the $i^\text{th}$ site's coordinates $X_i$ and $Y_i$. The projection of the wave vector onto the lattice direction $\kappa = \sin (\theta/2) \, 2 \pi/\lambda_\parallel$ determines the periodicity of the 1D optical lattice used for axial confinement. For Eqs.~\eqref{eq:mtarray} and \eqref{eq:standingwave} it is assumed that the 
out-of-plane confinement from $V_\bot$ is weak in comparison to that from $V_\parallel$ and that the laser beams generating $V_\parallel$ have a waist that is larger than the extent of the microtrap array. During the cooling process we assume $\mathcal{V}_\bot / k_B = \unit{1}{mK}$, which is consistent with the values used in experiments \cite{Kaufman2012, Thompson2013, Kaufman2014}. In order to have an equally strong confinement in the out-of-plane direction we choose $\mathcal{V}_\parallel/k_B = \unit{2.5}{mK}$. The chosen parameters are summarized in Table \ref{table:parameters}.

\subsection{Trap frequencies and Bose-Hubbard parameters}
\label{sec:derivedparam}

In order to evaluate the expressions for the adiabatic Lagrangian functions derived in \sect{sec:adiab} we need to express the trap frequencies and the Hubbard parameters as functions of the optical potential depths $\mathcal{V}_\bot$ and $\mathcal{V}_\parallel$. This will be done in the present subsection. The harmonic trapping frequencies can be computed from the curvature of the potentials given in Eqs.~\eqref{eq:mtarray} and \eqref{eq:standingwave} yielding
\begin{align}
\label{eq:freq1}
\Omega_{x} &= \Omega_{y} =\sqrt{\frac{4 \mathcal{V}_{\bot}}{m w_0^2}},
&\Omega_z &= \sqrt{\frac{2 \kappa^2 \mathcal{V}_{\parallel}}{m}}.
\end{align}
In combination with \eq{eq:adiadeep}, these expressions allow us to estimate the adiabaticity of the transfer process in the atomic limit.

In order to obtain the Hubbard parameters for tunneling $J$ and on-site interaction $U$, we need to compute the Wannier functions $w_i$. Since the optical potential is a sum of the in-plane part $V_\bot $ and the axial part $V_\parallel$, the Wannier functions factorize
\begin{equation}
\label{eq:Wannier}
w_i^0(\vec{r}) = \varphi_i (x,y) \phi (z).
\end{equation}
In axial direction the tunneling is strongly suppressed at all times. Therefore, a natural choice for $\phi$ is the ground state of one slice of the standing wave potential given in \eq{eq:standingwave}. We calculate $\phi$ by solving the corresponding 1D time-independent Schrödinger equation numerically. For the potential in the $x$-$y$-plane we assume a regular square lattice of $20 \times 20$ sites and periodic boundary conditions. $\varphi_i$ is the lowest band Wannier function for this potential obtained from a numerical band structure calculation \cite{Walters2013}. The Hubbard parameters for tunneling $J$ between adjacent sites 
$i$ and $j$ and the on-site interaction $U$ can be calculated from Eqs. \eqref{eq:J} and \eqref{eq:U} respectively. It is worth noting that the results for $U$ and $J$ can be adopted for different lattice geometries, such as one-dimensional, triangular, and hexagonal lattices since the relative deviations are small.

\begin{figure}
	\centering
	\includegraphics[scale=1]{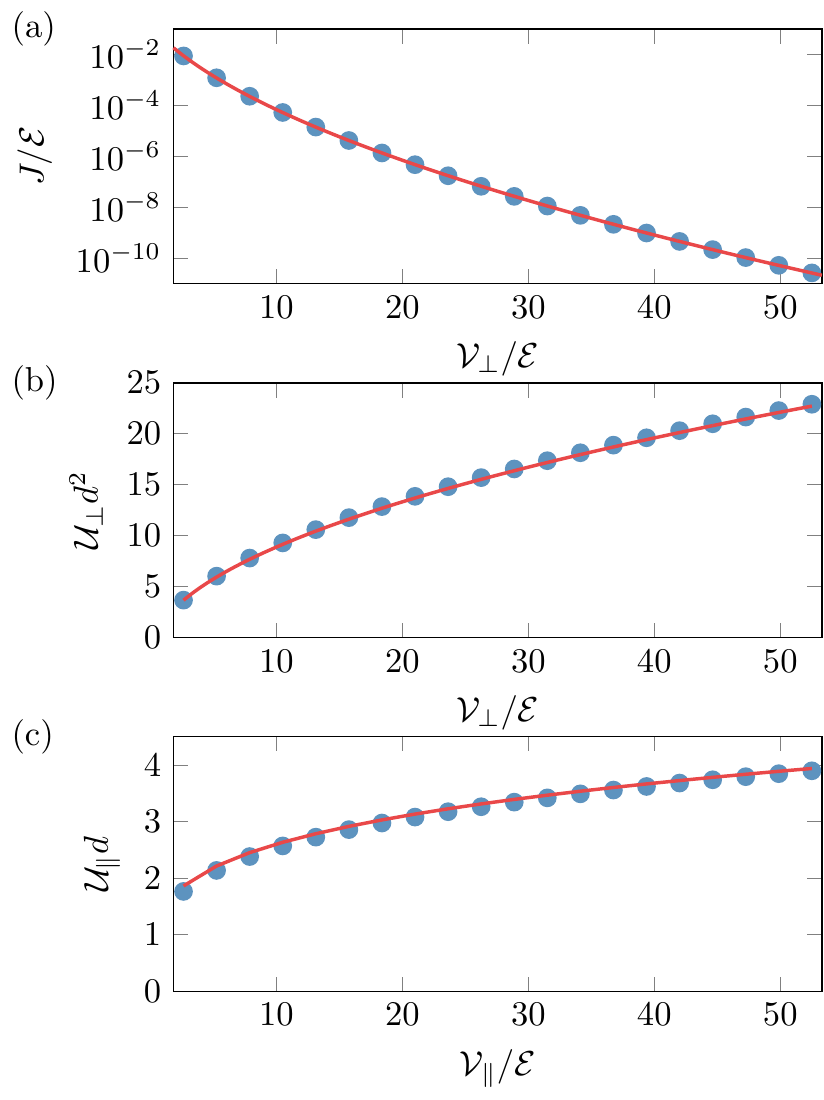}
	\caption{Comparison between results from the numerical solution of the single-particle Schrödinger equation (circles) and the approximate closed form expressions (lines) for the Hubbard parameters. (a) Tunneling parameter $J$ as a function of the potential depth $\mathcal{V}_\bot$ for $^{87}$Rb atoms in a square lattice of Gaussian dipole traps with waist $w_0=\unit{0.71}{\micro m}$ and trap spacing $d=\unit{1.7}{\micro m}$. (b) The in-plane part $\mathcal{U}_\bot$ of the on-site interaction parameter versus 
		$\mathcal{V}_\bot$ for the same parameters as in (a). (c) The out-of-plane part $\mathcal{U}_\parallel$ of the on-site interaction parameter versus $\mathcal{V}_\parallel$ for the same parameters as in (a).}
	\label{fig:BHP}
\end{figure}

For convenience in later computations simple expressions for the Hubbard parameters are advantageous. The on-site interaction strength can be reliably approximated by using Gaussian wave functions for $\phi$ and $\varphi_i$ \cite{Zwerger2003}. For the out-of-plane direction the harmonic oscillator length $a_\parallel$ can be used as $1/\sqrt{e}$ width for the Gaussian wave function
\begin{align}
\label{eq:aparallel}
a_\parallel &= \sqrt[4]{\frac{d^2 \mathcal{E}}{4 \pi^2 \kappa^2 \mathcal{V}_\parallel}}.
\end{align}
Here, we have introduced the natural energy scale of a lattice $\mathcal{E}= h^2/(2md^2)$ with trap spacing $d$. For $^{87}$Rb and $d=\unit{1.7}{\micro m}$ this yields $\mathcal{E} = k_B \, \unit{38.1}{nK} = h \, \unit{794}{Hz}$. In order to obtain a satisfying approximation for the Wannier function $\varphi_i$, we perform a variational calculation to find the wave function's width that minimizes the energy in a Gaussian potential well (cf. Appendix \ref{sec:variationcalc}). This yields
\begin{align}
a_\bot = \sqrt{\frac{w_0^2 d}{2 \pi w_0 \sqrt{2\mathcal{V}_\bot /\mathcal{E}}-2d}}.
\label{eq:abot}
\end{align}

One can evaluate the on-site interaction  given in \eq{eq:U},  
using the above expressions and the factorization ansatz for the Wannier function of \eq{eq:Wannier} and obtains
\begin{equation}
U = g \; \mathcal{U}_\bot \; \mathcal{U}_\parallel ,
\label{eq:U2}
\end{equation}
with the in-plane and axial part defined by
\begin{align}
\label{eq:uinplane}
\mathcal{U}_\bot &= \iint \varphi_i^4 (x,y) \; \dd x \; \dd y \approx \sqrt{\frac{2 \mathcal{V}_\bot}{w_0^2 d^2 \mathcal{E}}} -\frac{1}{\pi w_0^2}  ,\\
\mathcal{U}_\parallel &= \int \phi^4 (z) \; \dd z  \approx \sqrt[4]{\frac{\kappa^2 \mathcal{V}_\parallel}{d^2\mathcal{E}}}.
\label{eq:uooplane}
\end{align}

The tunneling parameter $J$ can not be well approximated using the Gaussian wave function ansatz because it significantly underestimates the Wannier function's value at the position of neighboring sites. Instead, we parametrize $J$ using a semiclassical ansatz \cite{Gerbier2005, Lee2009}
\begin{equation}
\label{eq:Jfit}
J = A \, (\tfrac{\mathcal{V}_\bot}{\mathcal{E}})^C \, e^{ -B \sqrt{\mathcal{V}_{\bot}/\mathcal{E}}} \mathcal{E}.
\end{equation}
A fit to our numerical calculations yields $A=2.26\pm0.05$, $B=4.02\pm0.01$, $C=1.00\pm0.03$. \fig{fig:BHP} shows the comparison between the discussed approximations and the results from the numerical band structure calculations revealing quantitative agreement.

\section{Rapid adiabatic preparation of a Mott insulator}
\label{sec:mott}
In this section the transfer from the atomic limit to a Mott insulator close to the quantum phase transition is investigated. The challenge is to find ramps $\mathcal{V}_\bot (t)$ and $\mathcal{V}_\parallel (t)$ that minimize excitations during this process. 
This is resolved by minimizing the functional Eq.~\eqref{eq:measure1} with the error measure Eq.~\eqref{eq:additivity}.

Before optimal ramp shapes can be computed the initial and final values for the potential depth need to be determined. The initial values are fixed by the requirement of efficient sideband cooling and given in \sect{sec:realisticp}, whereas the final values are determined by the targeted many-body regime. In this case we want to prepare the system in the Mott-insulator phase close to the phase transition, occurring at $U/J=3.4$ for a 1D lattice. Therefore, we choose a final value of $U/J=10$. 

In order to obtain equal trap frequencies in all directions $\Omega_x=\Omega_y=\Omega_z$, we choose a constant ratio 
\begin{equation}
	\frac{\mathcal{V}_\bot(t)}{\mathcal{V}_\parallel (t)} = \frac{\kappa^2 w_{0\bot}^2}{2}.
\end{equation}
This determines the final potential depths $\mathcal{V}_\bot (\tau)/k_B = \unit{158}{nK}$ and $\mathcal{V}_\parallel (\tau)/k_B = \unit{395}{nK}$ yielding $U/h$~=~\unit{22}{Hz} and $J/h$~=~\unit{2.2}{Hz}. Due to the constant ratio between the potential depths the instantaneous adiabatic Lagrangian function $\mathcal{L}$ can be expressed as a function of $ \mathcal{V}_\bot$ and $\dot{\mathcal{V}}_\bot$ only.

\subsection{Optimal ramps for the potential depth}
\label{sec:ramps}

To gauge the quality of the procedure of Sec.~\ref{sec:adiab}, we chose a standard approach for finding an  adiabatic ramp $\mathcal{V}_\bot (t)$ as a reference. We use a suitable set of test functions as an ansatz and optimize the parameters. Since the system traverses two different regimes, which are associated with two different time scales for an adiabatic transfer (cf. \sect{sec:adiab}), we choose a bi-exponential ansatz of the form
\begin{align}
\label{eq:ramp1}
\mathcal{V}_\bot(t) &= V_a e^{-t/\tau_a} + 
V_b e^{-t/\tau_b},
\end{align}
with time constants $\tau_a,\tau_b$ and amplitudes $V_a,V_b$. The amplitudes are fixed by imposing the 
boundary values at $t=0$ and $t=\tau$. The time constants are computed by numerically minimizing the 
quantity $\mathds{E}_\infty$, i.\thinspace{}e. calculating
\begin{equation}
\min\limits_{\tau_a,\tau_b}\mathds{E}_\infty(\mathcal{V}_\bot, \dot{\mathcal{V}}_\bot).
\end{equation}

The red line in \fig{fig:potentialdepth} shows the resulting ramp $\mathcal{V}_\bot (t)$ for the given parameters and $\tau $~=~\unit{50}{ms}. For this ramp the time dependencies of $\mathcal{L}_\text{al}$ and $\mathcal{L}_\text{bh}$ are shown in \fig{fig:adiabatictransfer} (dotted and solid red line respectively). The fact that during the first \unit{15}{ms} both $\mathcal{L}_\text{al}$ and $\mathcal{L}_\text{bh}$ are much smaller than $\mathds{E}_\infty$ indicates that a better ramp can be realized with a faster decrease during this time interval.

\begin{figure}
	\centering
	\includegraphics[scale=1]{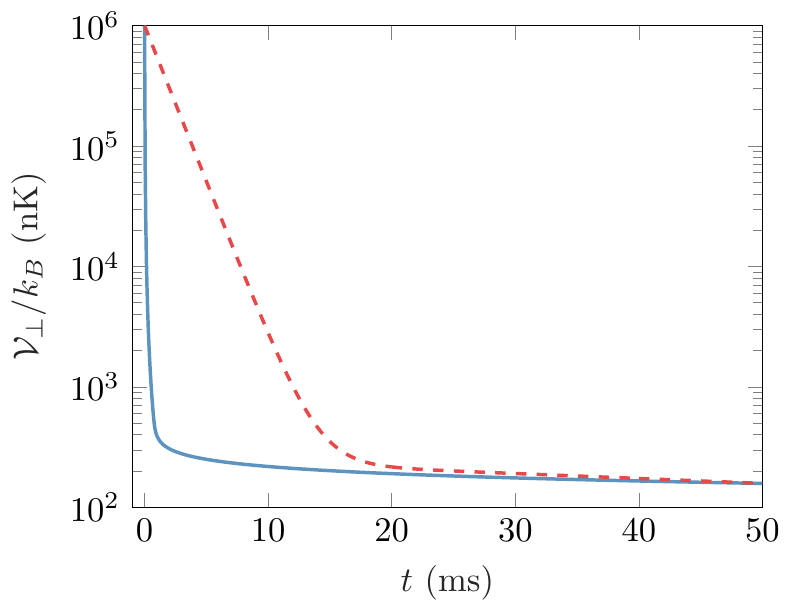}
	\caption{Potential depth $\mathcal{V}_\bot (t)$ versus time $t$. The red dashed line shows the ramp resulting from the bi-exponential ansatz whereas the blue solid line is the optimal adiabatic ramp.}
	\label{fig:potentialdepth}
\end{figure}

\begin{figure}
	\centering
	\includegraphics[scale=1]{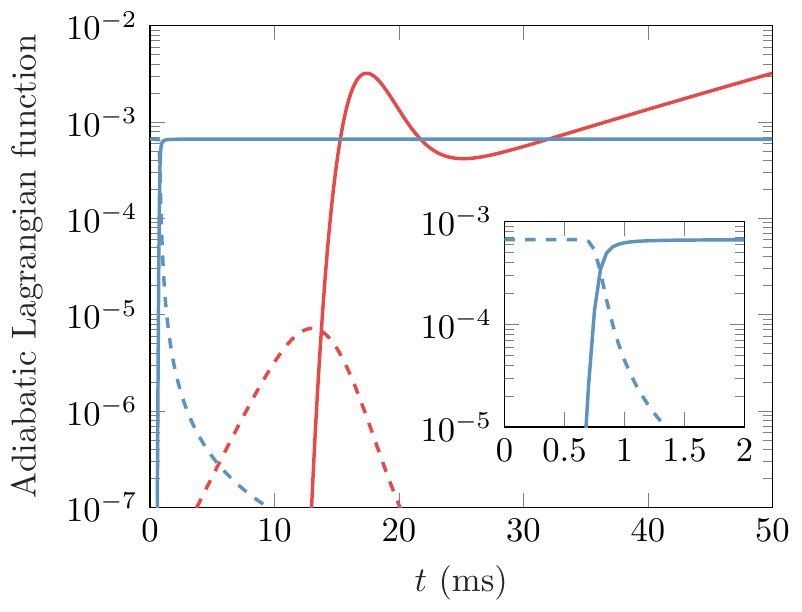}
	\caption{The adiabatic Lagrangian functions per site $\mathcal{L}_\text{al}/M$ (dashed) and $\mathcal{L}_\text{bh}/M$ (solid) are plotted versus time $t$ for a bi-exponential (red) and an optimal adiabatic (blue) transfer sequence of duration $\tau = \unit{50}{ms}$.}
	\label{fig:adiabatictransfer}
\end{figure}

The variational approach proposed in this article follows from solving the Euler-Lagrange equation
\begin{equation}
\label{eq:EL}
\frac{\text{d}}{\text{d}t} \frac{\partial \mathcal{L}}{\partial \dot{\mathcal{V}}_\bot}= \frac{\partial \mathcal{L}}{\partial \mathcal{V}_\bot},
\end{equation}
as discussed in \sect{sec:adiab}. The explicit form of the above equation can be obtained by using 
\Cref{eq:adiadeep,eq:Lbh,eq:Mbh,eq:freq1,eq:aparallel,eq:abot,eq:U2,eq:uinplane,eq:uooplane,eq:Jfit,eq:additivity}. In general, a solution $\mathcal{V}_\bot (t)$ to the above equation makes the functional $\mathds{E}_1$ stationary. However, in Appendix \ref{sec:proof} we show that in this particular case it also minimizes $\mathds{E}_1$ and $\mathds{E}_\infty$. Therefore, a solution to \eq{eq:EL} can be considered as an optimal adiabatic ramp. It is worth noting that $\mathcal{L}$ is a constant of motion. Therefore, the optimal adiabatic ramp is equivalent to constant adiabaticity pulses used in nuclear magnetic resonance \cite{Baum1985}. For a ramp duration of $\tau$~=~\unit{50}{ms}, this ramp is shown in \fig{fig:potentialdepth} (blue line). 
As expected from the discussion of the bi-exponential ramp 
function the optimal ramp shape shows a much faster decrease 
for $t<$~\unit{15}{ms}. The dashed and solid blue lines in \fig{fig:adiabatictransfer} show the time dependence of the 
components $\mathcal{L}_\text{al}$ and $\mathcal{L}_\text{bh}$ respectively. 
This demonstrates that inter-band excitations are only relevant during the first millisecond. Thereafter, intra-band excitations dominate. 
The transition point between the regimes is 
given by the condition $\mathcal{M}_{\text{al}}(\mathcal{V}_\bot^c)=
\mathcal{M}_{\text{bh}}(\mathcal{V}_\bot^c)$ and specifies a characteristic value of the parameter 
$\mathcal{V}_\bot^c/k_B=\unit{479}{\nano \kelvin}$.

In the following, we derive analytic 
expressions for the optimal adiabatic ramp shapes
in the regimes of intra-band and inter-band excitations, respectively.
It is straight forward to obtain the ramp shape for the initial time interval, in which inter-band excitations dominate, using Eqs. \eqref{eq:horamp} and \eqref{eq:freq1}
\begin{equation}
\mathcal{V}_\bot (t) = 
\frac{\mathcal{V}_1 \mathcal{V}_2 }{
\left[ \sqrt{\mathcal{V}_2} + 
\left( \sqrt{\mathcal{V}_1} -  \sqrt{\mathcal{V}_2} \right) \frac{t}{\tau_1} \right]^2}, \quad \forall t<\tau_1.
\label{eq:analytic_al}
\end{equation}
Here, we have introduced $\mathcal{V}_1 = \mathcal{V}_\bot (0)$, 
$\mathcal{V}_2 = \mathcal{V}_\bot (\tau_1)$, and 
$\tau_1 = \unit{0.7}{ms}$ which marks the end of the first time interval (cf. inset of Fig. \ref{fig:adiabatictransfer}). 

\begin{figure}
	\centering
	\includegraphics[scale=1]{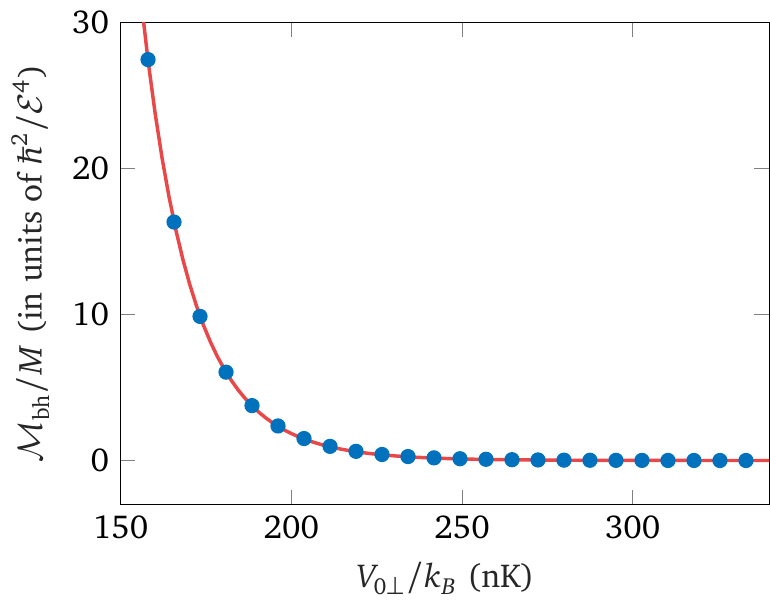}
	\caption{The mass function per site $\mathcal{M}_\text{bh}/M$ is plotted versus the potential depth $\mathcal{V}_\bot$. The blue points represent the full mass function obtained from Eqs. \eqref{eq:Lbh}, \eqref{eq:Mbh}, \eqref{eq:U2}, \eqref{eq:uinplane}, \eqref{eq:uooplane}, and \eqref{eq:Jfit}, whereas the orange line corresponds to the approximation given in \eq{eq:massfunction}.}
	\label{fig:massfunction}
\end{figure}

For the second time interval intra-band transitions dominate. The corresponding mass function $\mathcal{M}_\text{bh}$ can be determined from Eqs. \eqref{eq:Lbh}, \eqref{eq:Mbh}, \eqref{eq:U2}, \eqref{eq:uinplane}, \eqref{eq:uooplane}, and \eqref{eq:Jfit}. This complicated expression prohibits an analytic calculation of the integral in \eq{integral}. However, for the relevant parameter regime we find that
\begin{equation}
\mathcal{M}_\text{bh} (\mathcal{V}_\bot) \approx \frac{M\hbar^2}{\mathcal{E}^3\mathcal{V}_\bot} \exp \left(a - b \sqrt{\mathcal{V}_\bot} \right).
\label{eq:massfunction}
\end{equation}
with fit parameters $a=24.4$ and $b=9.66 \; \mathcal{E}^{-1/2}$. The above approximation is compared to the full expression for $\mathcal{M}_\text{bh}$ in Fig \ref{fig:massfunction}. Using Eqs. \eqref{eq:massfunction} and \eqref{integral} an approximate expression for the optimal adiabatic ramp can be derived
\begin{equation}
\mathcal{V}_\bot (t) = \mathcal{V}_0 \ln^2 \left( \tfrac{t-t_0}{\tau_2} \right), \quad \forall t > \unit{1}{ms}.
\label{eq:analytic_bh}
\end{equation}
with $\tau_2 = \unit{651}{s}$, 
$\mathcal{V}_0/k_B=\unit{1.75}{nK}$, and $t_0=\unit{0.83}{ms}$. In \fig{fig:comparison} the closed form expressions for the optimal adiabatic ramp are compared to the numeric result showing excellent agreement in the respective time intervals. 

\begin{figure}
\centering
\includegraphics[scale=1]{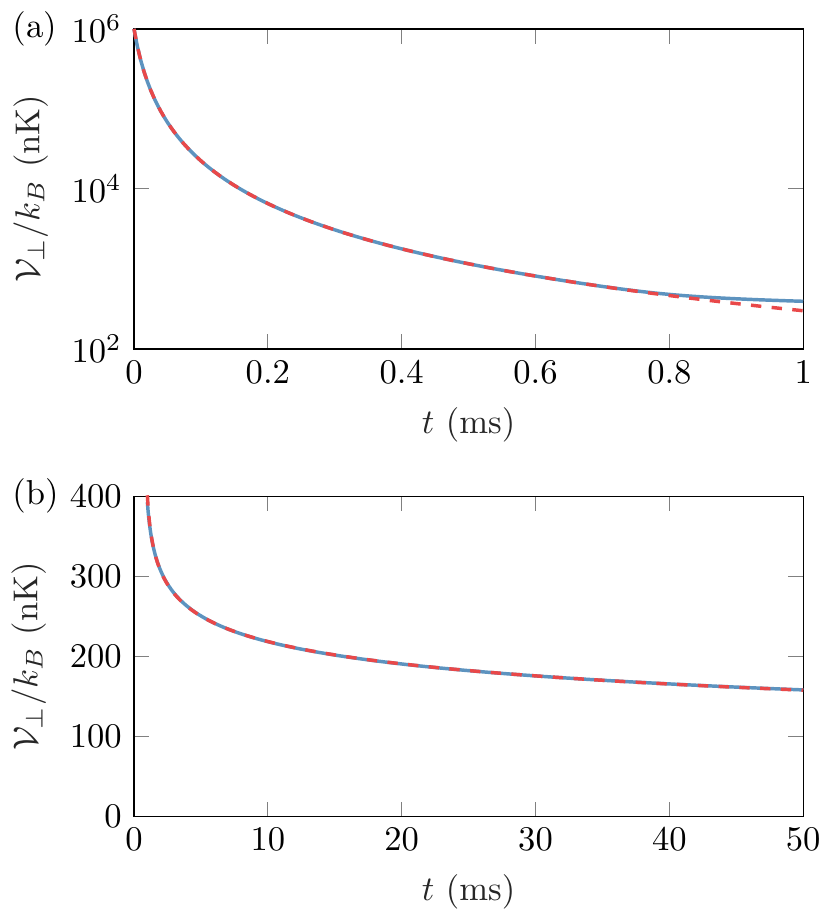}
\caption{Optimal adiabatic potential ramp $\mathcal{V}_\bot (t)$ versus time $t$: comparison between the numerical solution of the Euler-Lagrange equation (blue solid lines) and the analytic approximation (dashed red lines) given in Eqs. \eqref{eq:analytic_al} and \eqref{eq:analytic_bh}. For $t < \unit{0.7}{ms}$ (a) inter-band excitations dominate, whereas for $t> \unit{1}{ms}$ (b) intra-band excitations are the most relevant.}
\label{fig:comparison}
\end{figure}

We proceed by investigating the dependency of $\mathds{E}_\infty$ on the ramp duration $\tau$. It is worth noting that due to the structure of the Euler-Lagrange equation (cf. Eqs. \eqref{eq:adiasumquad} and \eqref{euler}) the optimal adiabatic ramp $\gamma'$ for a duration $\tau'$ can be obtained from a given optimal adiabatic ramp $\gamma$ for a duration $\tau$ via $\gamma' (t) = \gamma (t \tau/\tau')$ \footnote{We thank an anonymous referee for the interesting question about the scale invariance of the optimal ramp shape.}. In \fig{fig:fidelity} (a), $\mathds{E}_\infty (\tau)$ is shown for the bi-exponential and the optimal ramp shape. In both cases the data agrees very well with a $k \; \tau^{-2}$ dependency, with $k$ being a constant. Least square fits yield 
$k=\unit{8.01}{(ms)^2} $ and 
$k=\unit{1.66}{(ms)^2}$ for the bi-exponential and the optimal ramp respectively. 
This dependency can be explained 
by $\mathds{E}_\infty \propto (\partial_t \mathcal{V}_\bot)^2 \propto \tau^{-2}$, which also coincides with the result for the atomic limit given in \eq{eq:EinfAL}.

\begin{figure*}
\centering
\includegraphics[scale=1]{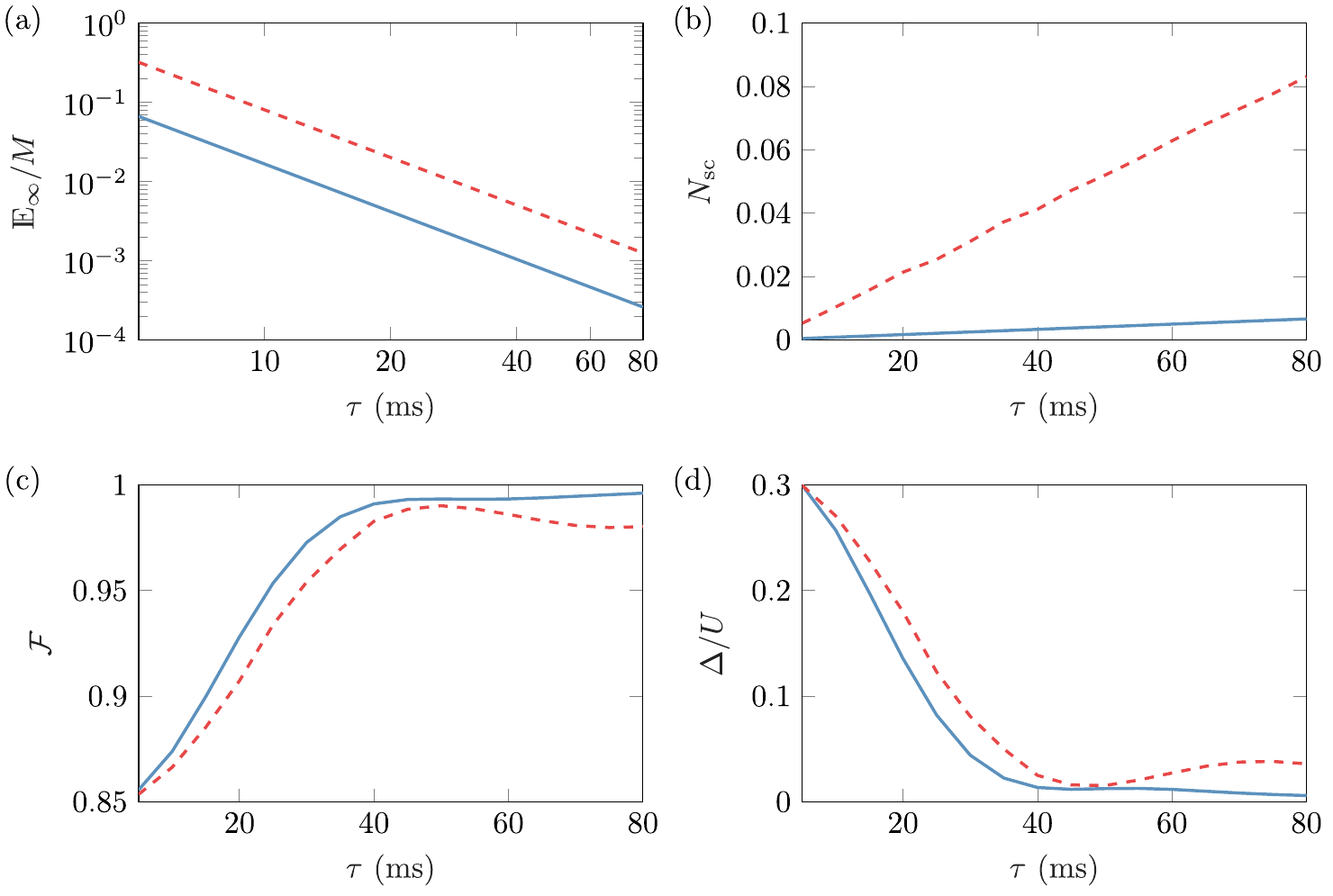}
\caption{Figures of merit for the bi-exponential (dashed red) and the optimal adiabatic (solid blue) ramp versus the ramp duration $\tau$: (a) maximal value $\mathds{E}_\infty/M$ of the adiabatic Lagrangian function per site during the ramp, (b) number of scattering events per atom $N_\text{sc}$ during the ramp, (c) transfer fidelity $\mathcal{F}$, (d) excess energy $\Delta$ of the final state in units of the interaction energy $U$ at $t=\tau$. The results shown in (c) and (d) are obtained from a many-body calculation for a 1D lattice with periodic boundary conditions, $N=8$ particles, and $M=8$ sites.}
\label{fig:fidelity}
\end{figure*}

\subsection{Impact of light scattering}

The physical process that limits the usage of long ramp durations is heating due to light scattering.
This effect has been studied in \cite{Gordon1980,Dalibard1985} and recently, with regard to optical lattices, in \cite{Pichler2010,Gerbier2010, Schachenmayer2014}. In order to estimate the impact of this process we calculate the number of scattering events per atom during the transfer process (cf. Appendix \ref{sec:heating}). \fig{fig:fidelity} (b) shows the dependency of $N_\text{sc}$ on the ramp duration $\tau$ for both the bi-exponential and the optimal ramp. The relation is linear with slopes of \unit{1.04}{s^{-1}} and \unit{0.08}{s^{-1}} for the bi-exponential and the optimal ramp respectively. Again, this can be explained using the time scale argument. The number of scattering events per atom can be reduced further by using light with a larger detuning, e.g. $\lambda_\bot = \unit{1064}{nm}$. However, already for the parameters used in this work an adiabatic transfer processes with negligible scattering can be realized.

\subsection{Fidelity of the transfer process}

In order to validate the adiabaticity of the transfer process we perform simulations of the many-body system using the calculated ramps. For this purpose we use the 1D single-band Bose-Hubbard model with periodic boundary conditions. This disregards possible excitations to higher bands. However, \fig{fig:adiabatictransfer} shows that these excitations are negligible for the majority of the ramp duration. 

We solve the time-dependent Schr\"odinger equation
\begin{equation}
\label{eq:mbse}
i \hbar \partial_t \ket{\psi (t)} = \hat{H}(t) \ket{\psi (t)},
\end{equation}
for the Hamilton operator $\hat{H}(J(t),U(t))$ given in \eq{eq:HJ}. The time dependence of the parameters $U$ and $J$ is determined by the ramp $\gamma (t) = \mathcal{V}_\bot (t)$ computed in Sec. \ref{sec:ramps}. In order to solve \eq{eq:mbse} we expand the system's state using the Fock basis (cf. Eqs. \eqref{eq:occupancy1}). This results in a system of ordinary differential equations
\begin{align}
i \hbar \dot{\psi}_{\vec{\eta}} (t) &= \sum\limits_{|\vec{\eta}'|=N} A_{\vec{\eta} \vec{\eta}'} (t) \; \psi_{\vec{\eta}'} (t), \\
A_{\vec{\eta} \vec{\eta}'} (t) &= \bra{\vec{\eta}} \hat{H} (t) + \hat{W} (t) \ket{\vec{\eta}'}.
\end{align}
The operator $\hat{W}(t)$ stems from the temporal change in the Wannier functions and is given by
\begin{align} 
\hat{W} (t) &= i \hbar \dot{\mathcal{V}}_\bot (t) \sum\limits_{n p} \sum\limits_{i j} C_{ij;1}^{np} (\mathcal{V}_\bot (t)) \; \hat{a}_{i}^{n\dagger} \hat{a}_{j}^p,
\end{align}
However, as stated earlier, this term does not induce intra-band excitation, \ie $C_{ij;1}^{nn}=0$ \cite{Pichler2013, Lacki2013a}. Therefore, we neglect it for our single-band simulation.

The initial state $\ket{\psi (0)}$ is the ground state of $\hat{H}(0)$. From the final state $\ket{\psi (\tau)}$ two figures of merit are obtained
\begin{align}
\mathcal{F} &= |\braket{\phi}{\psi (\tau)}|,\\
\Delta &= \bra{\psi (\tau)} \hat{H}(\tau) \ket{\psi (\tau)} - \bra{\phi} \hat{H} (\tau) \ket{\phi}.
\end{align}
Here, $\ket{\phi}$ is the ground state of the final Hamilton operator $\hat{H} (\tau)$, $\mathcal{F}$ is the transfer fidelity, and $\Delta$ is the energy difference between $\ket{\phi}$ and $\ket{\psi (\tau)}$. Figure \ref{fig:fidelity} (c) and (d) show the dependency of $\mathcal{F}$ and $\Delta $ on $\tau$ for the bi-exponential and the optimal adiabatic ramp. As expected, the transfer fidelity increases and the excess energy decreases for increasing ramp durations. This indicates a reduction of ramp-induced excitations. At $\tau \approx \unit{40}{ms}$ the slopes change significantly and saturation can be observed. In the case of the bi-exponential ramp this is accompanied by small amplitude oscillations indicating excitations due to non-adiabaticity. 

The calculations are performed with particle numbers up to $N=8$ and unit filling, \ie $M=N$. 
For ramp durations $\tau > \unit{40}{ms}$, both transfer fidelity and  excess energy are size independent. 

The results of this section show that a high transfer fidelity $\mathcal{F}>98\%$ can be achieved with ramp durations below \unit{50}{ms} and negligible photon scattering $N_\text{sc}<0.01$. It is worth noting that the ramp shape might be further improved by finding shortcuts to adiabaticity using optimal control \cite{Doria2011}. However, the presented approach has the advantage to result in simple and robust ramps.

\section{Limits on scalability}
\label{sec:scale}
\begin{figure}
\centering
\includegraphics[scale=1]{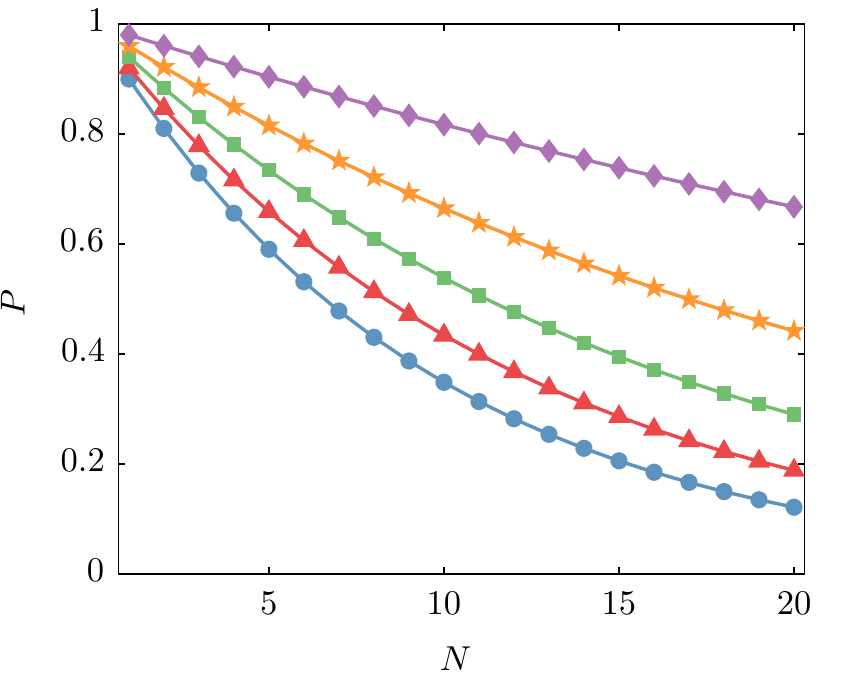}
\caption{Joint probability $P=p_0^N$ to prepare $N$ atoms in the motional ground state of $N$ isolated traps versus particle number N, for various single site success probabilities $p_0=0.9$ (blue circles), $p_0=0.92$ (red triangles), $p_0=0.94$ (green squares), $p_0=0.96$ (orange stars), and $p_0=0.98$ (violet diamonds).}
\label{fig:prepprob}
\end{figure}
Clearly, there are limitations for the maximum number of atoms that can be prepared. One limitation arises from the necessity to provide an array of many, sufficiently deep optical microtraps. With AODs, SLMs, or MLAs, and laser powers of a few watts, it is possible to produce arrays of a few hundred traps \cite{Barredo2016, Nogrette2014, Schlosser2011}. 

The next challenge is to prepare exactly one atom per trap. For arrays of up to $50$ microtraps, unit filling is experimentally feasible \cite{Endres2016, Barredo2016}. According to Ref.~\cite{Barredo2016}, this could be extended to a few hundred traps using state-of-the-art technology. 

Another prerequisite of the discussed scheme is the preparation of atoms in the motional ground-state with high fidelity. Using Raman sideband cooling, an  occupation probability of $p_0=90 \%$ has been achieved \cite{Kaufman2012}. This value was limited by a weak confinement in axial direction. Application of additional axial confinement, as considered in this work, should enhance the probability. However, if the technique is applied in parallel to an $N$-trap array then the  joint success probability $P=p_0^N$ to cool all atoms to the motional ground state decreases exponentially. This trend is shown in \fig{fig:prepprob} for several values of $p_0$ and constitutes the biggest challenge on the path to large atom numbers.

\section{Conclusions and outlook}
\label{sec:conclusions}

In conclusion, we have analyzed a preparation scheme for a Mott insulator state in the itinerant regime, starting from an ensemble of individual atoms in the atomic limit. 
On this behalf, the depth of the optical potential is ramped down significantly. 
In order to minimize both ramp-induced excitations and external heating during this process, we propose a variational procedure to obtain optimal rapid adiabatic ramp shapes.  
The choice of error functionals is physically motivated by the adiabatic theorem and can be generalized readily to optimize 
multi-dimensional time-dependent control parameters. 
In comparison to a full optimal control procedure \cite{Doria2011},
the presented approach is simple and robust.

For realistic experimental parameters, we investigate the fidelity 
of the resulting optimal ramps and asses the detrimental impact of 
spontaneous photon scattering. 
This demonstrates the feasibility of the proposed scheme with state-of-the-art technology. These conclusions are based on simulations of the one-dimensional Bose-Hubbard model. However, we expect similar results for two dimensions taking into account the scaling of the adiabatic error function \eq{eq:Mbh} with the coordination number.

If the depth of the microtrap array is reduced beyond the point discussed in this work, then first the superfluid phase of the Bose-Hubbard model and finally a BEC can be prepared. Here, the analysis of the preparation process based on the adiabatic theorem breaks down, because the energy gap between the ground state and the lowest excited state vanishes. However, this process corresponds to the time-reversed loading scheme used in optical lattices. The feasibility of this approach for microtrap arrays with similar parameters as discussed in this work is shown in \cite{Sturm2017}. This opens an alternative route for the preparation of BECs by direct laser cooling \cite{Hu2017}, which is especially appealing for the investigation of atomic and molecular species that can not be cooled evaporatively.

\section*{Acknowledgments}
M.R.S. and R.W. acknowledge support from the German Aeronautics and Space Administration (DLR) through Grant No. 50 WM 1557. M.S. and G.B. acknowledge support by the Deutsche Forschungsgemeinschaft (DFG) through Grant No. BI 647/6-1 within the Priority Program SPP 1929 (GiRyd).

\appendix

\section{Equivalence of minimizing $\mathds{E}_1$ and $\mathds{E}_{\infty}$}
\label{sec:proof}

In this appendix we consider the situation of \sect{sec:mott}. In the time interval $[0,\tau]$ the system is controlled by a one-dimensional and monotonically decreasing parameter curve $\gamma (t)$ with $\gamma (0) = \gamma_i$ and $\gamma (\tau) = \gamma_f$. The Lagrange function
\begin{equation}
	\mathcal{L}(\gamma,\dot{\gamma}) = \frac{1}{2} \mathcal{M}(\gamma) \dot{\gamma}^2
\end{equation}
is convex and the mass function $\mathcal{M}(\gamma)$ is sufficiently smooth, monotonically decreasing, and positive. Under these conditions the following proposition can be stated.

\theoremstyle{definition}
\newtheorem*{prop}{Proposition}
\begin{prop}
The parameter curve $\gamma_0 (t)$ is a minimum of the functionals $\mathds{E}_1 [\gamma,\dot{\gamma}]$ and $\mathds{E}_\infty [\gamma,\dot{\gamma}]$, if $\gamma_0 (t)$ satisfies the corresponding Euler-Lagrange equation \eqref{euler}.
\end{prop}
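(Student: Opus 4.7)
The plan is to exploit the special structure of the time-independent, quadratic Lagrangian $\mathcal{L}(\gamma,\dot{\gamma}) = \tfrac{1}{2}\mathcal{M}(\gamma)\dot{\gamma}^2$. First I would observe that, because $\mathcal{L}$ has no explicit time dependence and is homogeneous of degree two in $\dot{\gamma}$, the associated Hamiltonian coincides with $\mathcal{L}$ itself and is conserved along Euler-Lagrange trajectories. Concretely, $\pi \dot{\gamma} = \mathcal{M}\dot{\gamma}^2 = 2\mathcal{L}$, so $\mathcal{H} = \pi\dot{\gamma} - \mathcal{L} = \mathcal{L}$, and conservation of $\mathcal{H}$ along $\gamma_0$ gives $\mathcal{L}(\gamma_0(t),\dot{\gamma}_0(t)) \equiv \mathcal{L}_0$ with $\mathcal{L}_0 = K^2/(2\tau^2)$ from \eq{integral}, where $K \equiv \int_{\gamma_f}^{\gamma_i}\sqrt{\mathcal{M}(\gamma)}\,\text{d}\gamma$. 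This already yields $\mathds{E}_\infty[\gamma_0,\dot{\gamma}_0] = \mathds{E}_1[\gamma_0,\dot{\gamma}_0] = \mathcal{L}_0$.

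The crucial step is showing that $\gamma_0$ actually minimizes $\mathds{E}_1$, and this is where the main work lies. The idea is a Cauchy-Schwarz bound that decouples the parametrization from the geometric content of the curve. Applied to $\sqrt{\mathcal{M}(\gamma)}|\dot{\gamma}|$ and the constant $1$ on $[0,\tau]$, it gives
\begin{equation}
\left(\int_0^\tau \sqrt{\mathcal{M}(\gamma)}\,|\dot{\gamma}|\,\text{d}t\right)^{\!2} \leq \tau \int_0^\tau \mathcal{M}(\gamma)\dot{\gamma}^2\,\text{d}t = 2\tau^2\, \mathds{E}_1[\gamma,\dot{\gamma}].
\end{equation}
Because $\gamma$ is monotone with fixed endpoints, the change of variables $t \mapsto \gamma$ reduces the left-hand side to the geometric constant $K^{2}$, independent of the chosen reparametrization. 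Hence $\mathds{E}_1[\gamma,\dot{\gamma}] \geq K^{2}/(2\tau^{2}) = \mathcal{L}_0$ for every admissible curve, with equality iff $\sqrt{\mathcal{M}(\gamma)}|\dot{\gamma}|$ is constant in $t$, which is precisely the constant-$\mathcal{L}$ condition satisfied by $\gamma_0$. Therefore $\gamma_0$ attains the lower bound and minimizes $\mathds{E}_1$.

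To conclude the minimization of $\mathds{E}_\infty$, I would simply chain inequalities: for any admissible competitor $\gamma$,
\begin{equation}
\mathds{E}_\infty[\gamma_0,\dot{\gamma}_0] = \mathds{E}_1[\gamma_0,\dot{\gamma}_0] \leq \mathds{E}_1[\gamma,\dot{\gamma}] \leq \mathds{E}_\infty[\gamma,\dot{\gamma}],
\end{equation}
where the first equality uses conservation of $\mathcal{L}$ along the extremal, the middle step is the $\mathds{E}_1$ minimization just established, and the last step is the elementary bound of a time-average by a supremum of a non-negative integrand.

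The main technical subtlety, which I would flag but treat briefly, is justifying the restriction to monotone competitors so that $t \mapsto \gamma$ is a legitimate change of variable. This is painless: positivity of $\mathcal{M}$ implies that any non-monotone excursion strictly increases both $\mathds{E}_1$ and $\mathds{E}_\infty$ compared to its monotone rearrangement on the reduced time interval (and then a trivial rescaling back to $[0,\tau]$), so optima may be sought among monotone curves without loss of generality. Smoothness and positivity of $\mathcal{M}$ further guarantee that the Euler-Lagrange solution exists globally on $[0,\tau]$ and lies in the admissible class, closing the argument.
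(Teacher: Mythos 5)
Your proof is correct, but it is organized differently from the paper's. The paper splits the claim in two: the minimization of $\mathds{E}_1$ is obtained by citing the general fact that extremals of a convex Lagrangian are global minimizers, while the minimization of $\mathds{E}_\infty$ is proved separately by noting that $\int_0^\tau \sqrt{\mathcal{L}}\,\dd t$ is a reparametrization-invariant (path-independent) quantity, so that $\max_t \sqrt{\mathcal{L}}$ is minimized exactly when $\sqrt{\mathcal{L}}$ is constant, which the Euler--Lagrange solution achieves. You instead prove the $\mathds{E}_1$ bound from scratch via Cauchy--Schwarz applied to $\sqrt{\mathcal{M}(\gamma)}\,|\dot{\gamma}|$ and $1$, which yields the explicit lower bound $\mathds{E}_1 \geq K^2/(2\tau^2)$ with $K = \int_{\gamma_f}^{\gamma_i}\sqrt{\mathcal{M}}\,\dd\gamma$ and identifies the equality case as the constant-$\mathcal{L}$ curve; you then obtain the $\mathds{E}_\infty$ statement as a corollary through the chain $\mathds{E}_\infty[\gamma_0] = \mathds{E}_1[\gamma_0] \leq \mathds{E}_1[\gamma] \leq \mathds{E}_\infty[\gamma]$. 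Both arguments ultimately rest on the same two ingredients --- conservation of $\mathcal{L}$ along the extremal (your $\mathcal{H}=\mathcal{L}$ observation is the paper's step \textbf{I}) and the geometric invariance of the weighted arc length --- but your version is more self-contained (no appeal to an external convexity theorem), produces the optimal value explicitly, and derives the sup-norm result by the elementary average-versus-maximum inequality rather than a standalone argument. Your closing remark about monotone competitors is harmless but not strictly needed: the paper restricts the admissible class to monotonically decreasing curves from the outset, and in any case $\int_0^\tau\sqrt{\mathcal{M}}\,|\dot{\gamma}|\,\dd t \geq K$ holds for arbitrary curves joining the endpoints, so the Cauchy--Schwarz bound survives without the rearrangement step.
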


\begin{proof}
The fact that $\gamma_0 (t)$ is also a minimum of $\mathds{E}_1 [\gamma,\dot{\gamma}]$ follows from the convexity of the Lagrange function \cite{Lopez2012}. In order to prove that $\gamma_0 (t)$ is a minimum of $\mathds{E}_\infty [\gamma,\dot{\gamma}]$ the following intermediate steps are used.
\begin{itemize}
\item[\textbf{I}] $\mathcal{L}(\gamma_0(t),\dot{\gamma}_0(t))$ is constant for $t\in [0,\tau]$.
\item[\textbf{II}] $\gamma_0(t)$ minimizes the functional $\max_t \sqrt{\mathcal{L}(\gamma,\dot{\gamma})}$.
\end{itemize}
Using the fact that $\gamma_0$ satisfies the Euler-Lagrange equation, it is straight forward to show \textbf{I},
\begin{align}
\frac{d}{d t} \mathcal{L} (\gamma_0 (t), \dot{\gamma}_0 (t)) 
&= \mathcal{M}(\gamma_0) \dot{\gamma}_0 \ddot{\gamma}_0 + \frac{\dot{\gamma}_0^3}{2} \frac{\partial \mathcal{M}}{\partial \gamma}\bigg|_{\gamma=\gamma_0}\\
&= \dot{\gamma}_0 \left( \frac{d}{d t} \frac{\partial \mathcal{L}}{\partial \dot{\gamma}} - \frac{\partial \mathcal{L}}{\partial \gamma} \right)\bigg|_{\gamma=\gamma_0} = 0.
\end{align}
The key to prove the implication $\textbf{I} \Rightarrow \textbf{II}$ is to observe that $\int_0^\tau \sqrt{\mathcal{L}} \, \dd t$ is a geometric invariant and therefore path-independent
\begin{align}
\int\limits_0^\tau \sqrt{\mathcal{L}(\gamma(t),\dot{\gamma}(t))} \; d t &= - \int\limits_0^\tau \dot{\gamma}(t) \sqrt{\tfrac{1}{2} \mathcal{M}(\gamma(t))} \; d t\\
&= -\int\limits_{\gamma_i}^{\gamma_f} \sqrt{\tfrac{1}{2} \mathcal{M}(\gamma)} \; d \gamma.
\end{align}
From this observation follows that the functional $\max_t \sqrt{\mathcal{L}(\gamma,\dot{\gamma})}$ is minimized if $\sqrt{\mathcal{L}}$ is constant. The latter is obviously true if $\mathcal{L}$ is constant, which is achieved by the parameter curve $\gamma_0$ (cf. statement \textbf{I}). Consequently, $\gamma_0$ minimizes the functional $\max_t \sqrt{\mathcal{L}(\gamma,\dot{\gamma})}$, \ie $\textbf{I} \Rightarrow \textbf{II}$. It is apparent that $\mathbf{II}$ implies that $\gamma_0 $ is a minimum of $\mathds{E}_\infty [\gamma, \dot{\gamma}]$ since the function $x \mapsto x^2$ is monotonically increasing for $x>0$.
\end{proof}

\section{Variational ground-state in a two-dimensional Gaussian potential}
\label{sec:variationcalc}

We consider a Gaussian variational ansatz of the form
\begin{equation}
	\varphi (x,y) = \frac{1}{\sqrt{\pi} a_\bot} \exp \left( -\frac{x^2+y^2}{2 a_\bot^2} \right)
\end{equation}
for the ground state of a two-dimensional Gaussian potential well
\begin{equation}
	V(x,y)=\mathcal{V}_\bot \exp \left(-2 \frac{x^2+y^2}{w^2_{0\bot}} \right).
\end{equation}
We determine the wave function's width $a_\bot$ by minimizing the energy functional
\begin{align}
	E(a_\bot) &= \iint \varphi (x,y) \; H_{2D} \; \varphi (x,y) \; \dd x \; \dd y\\
	&= \frac{\hbar^2}{2 m a_\bot^2} - \mathcal{V}_\bot \frac{w_{0\bot}^2}{w_{0\bot}^2 + 2 a_\bot^2}.
\end{align}
This yields the expression given in \eq{eq:abot}. In the above equation the position representation of the two-dimensional single particle Hamilton operator is used
\begin{equation}
	H_{2D} = -\frac{\hbar^2}{2m} (\partial_x^2 + \partial_y^2) + V(x,y).
\end{equation}
A similar calculation for a one-dimensional Gaussian well is given in \cite{Nandi2010}.

\section{Scattering rates and light shifts }
\label{sec:heating}
In the time interval $[0,\tau]$, the cumulative number of scattered photons per atom is given by
\begin{equation}
N_\text{sc}(\tau) = \int_0^\tau \Gamma_\text{sc} (t) \; \dd t.
\end{equation}
For alkali atoms, trapped by far-off-resonant, linearly polarized laser beams with intensity $I$ and angular frequency $\omega$, the scattering rate $\Gamma_\text{sc}$ and the light shift $V$ read \cite{Grimm2000}
\begin{align}
\begin{split}
\Gamma_\text{sc}  =&  
\frac{\pi c^2 I \omega^3}{2 \hbar} \left[ \frac{\Gamma_1
^2}{\omega_1^6} \left( \frac{1}{\omega -  \omega_1} - \frac{1}{\omega +  
\omega_1}  \right)^2 \right.  \\
 &+\left. \frac{2\Gamma_2^2}{\omega_2^6} \left( \frac{1}{\omega -  \omega_2} -
 \frac{1}{\omega +  \omega_2}  \right)^2 \right],
\end{split}\\
\begin{split}
V =&  \frac{\pi c^2 I}{2} \left[ \frac{\Gamma_1}{\omega_1^3} \left( \frac{1}{
\omega -  \omega_1} - \frac{1}{\omega +  \omega_1}  \right) \right. \\
&+\left. \frac{2\Gamma_2}{\omega_2^3} \left( \frac{1}{\omega -  \omega_2} - 
\frac{1}{\omega +  \omega_2}  \right) \right].
\end{split}
\end{align}
The decay rates $\Gamma_i$ and the transition frequencies 
$\omega_i$ correspond to the D1 and D2 lines of the respective species. 


\begin{thebibliography}{53}%
	\makeatletter
	\providecommand \@ifxundefined [1]{%
		\@ifx{#1\undefined}
	}%
	\providecommand \@ifnum [1]{%
		\ifnum #1\expandafter \@firstoftwo
		\else \expandafter \@secondoftwo
		\fi
	}%
	\providecommand \@ifx [1]{%
		\ifx #1\expandafter \@firstoftwo
		\else \expandafter \@secondoftwo
		\fi
	}%
	\providecommand \natexlab [1]{#1}%
	\providecommand \enquote  [1]{``#1''}%
	\providecommand \bibnamefont  [1]{#1}%
	\providecommand \bibfnamefont [1]{#1}%
	\providecommand \citenamefont [1]{#1}%
	\providecommand \href@noop [0]{\@secondoftwo}%
	\providecommand \href [0]{\begingroup \@sanitize@url \@href}%
	\providecommand \@href[1]{\@@startlink{#1}\@@href}%
	\providecommand \@@href[1]{\endgroup#1\@@endlink}%
	\providecommand \@sanitize@url [0]{\catcode `\\12\catcode `\$12\catcode
		`\&12\catcode `\#12\catcode `\^12\catcode `\_12\catcode `\%12\relax}%
	\providecommand \@@startlink[1]{}%
	\providecommand \@@endlink[0]{}%
	\providecommand \url  [0]{\begingroup\@sanitize@url \@url }%
	\providecommand \@url [1]{\endgroup\@href {#1}{\urlprefix }}%
	\providecommand \urlprefix  [0]{URL }%
	\providecommand \Eprint [0]{\href }%
	\providecommand \doibase [0]{http://dx.doi.org/}%
	\providecommand \selectlanguage [0]{\@gobble}%
	\providecommand \bibinfo  [0]{\@secondoftwo}%
	\providecommand \bibfield  [0]{\@secondoftwo}%
	\providecommand \translation [1]{[#1]}%
	\providecommand \BibitemOpen [0]{}%
	\providecommand \bibitemStop [0]{}%
	\providecommand \bibitemNoStop [0]{.\EOS\space}%
	\providecommand \EOS [0]{\spacefactor3000\relax}%
	\providecommand \BibitemShut  [1]{\csname bibitem#1\endcsname}%
	\let\auto@bib@innerbib\@empty
	\bibitem [{\citenamefont {Endres}\ \emph {et~al.}(2016)\citenamefont {Endres},
		\citenamefont {Bernien}, \citenamefont {Keesling}, \citenamefont {Levine},
		\citenamefont {Anschuetz}, \citenamefont {Krajenbrink}, \citenamefont
		{Senko}, \citenamefont {Vuleti{\'c}}, \citenamefont {Greiner},\ and\
		\citenamefont {Lukin}}]{Endres2016}%
	\BibitemOpen
	\bibfield  {author} {\bibinfo {author} {\bibfnamefont {M.}~\bibnamefont
			{Endres}}, \bibinfo {author} {\bibfnamefont {H.}~\bibnamefont {Bernien}},
		\bibinfo {author} {\bibfnamefont {A.}~\bibnamefont {Keesling}}, \bibinfo
		{author} {\bibfnamefont {H.}~\bibnamefont {Levine}}, \bibinfo {author}
		{\bibfnamefont {E.~R.}\ \bibnamefont {Anschuetz}}, \bibinfo {author}
		{\bibfnamefont {A.}~\bibnamefont {Krajenbrink}}, \bibinfo {author}
		{\bibfnamefont {C.}~\bibnamefont {Senko}}, \bibinfo {author} {\bibfnamefont
			{V.}~\bibnamefont {Vuleti{\'c}}}, \bibinfo {author} {\bibfnamefont
			{M.}~\bibnamefont {Greiner}}, \ and\ \bibinfo {author} {\bibfnamefont
			{M.~D.}\ \bibnamefont {Lukin}},\ }\href
	{http://science.sciencemag.org/content/354/6315/1024.full} {\bibfield
		{journal} {\bibinfo  {journal} {Science}\ }\textbf {\bibinfo {volume}
			{354}},\ \bibinfo {pages} {1024} (\bibinfo {year} {2016})}\BibitemShut
	{NoStop}%
	\bibitem [{\citenamefont {Barredo}\ \emph {et~al.}(2016)\citenamefont
		{Barredo}, \citenamefont {de~L{\'e}s{\'e}leuc}, \citenamefont {Lienhard},
		\citenamefont {Lahaye},\ and\ \citenamefont {Browaeys}}]{Barredo2016}%
	\BibitemOpen
	\bibfield  {author} {\bibinfo {author} {\bibfnamefont {D.}~\bibnamefont
			{Barredo}}, \bibinfo {author} {\bibfnamefont {S.}~\bibnamefont
			{de~L{\'e}s{\'e}leuc}}, \bibinfo {author} {\bibfnamefont {V.}~\bibnamefont
			{Lienhard}}, \bibinfo {author} {\bibfnamefont {T.}~\bibnamefont {Lahaye}}, \
		and\ \bibinfo {author} {\bibfnamefont {A.}~\bibnamefont {Browaeys}},\ }\href
	{http://science.sciencemag.org/content/354/6315/1021} {\bibfield  {journal}
		{\bibinfo  {journal} {Science}\ }\textbf {\bibinfo {volume} {354}},\ \bibinfo
		{pages} {1021} (\bibinfo {year} {2016})}\BibitemShut {NoStop}%
	\bibitem [{\citenamefont {Kim}\ \emph {et~al.}(2016)\citenamefont {Kim},
		\citenamefont {Lee}, \citenamefont {Lee}, \citenamefont {Jo}, \citenamefont
		{Song},\ and\ \citenamefont {Ahn}}]{Kim2016}%
	\BibitemOpen
	\bibfield  {author} {\bibinfo {author} {\bibfnamefont {H.}~\bibnamefont
			{Kim}}, \bibinfo {author} {\bibfnamefont {W.}~\bibnamefont {Lee}}, \bibinfo
		{author} {\bibfnamefont {H.-g.}\ \bibnamefont {Lee}}, \bibinfo {author}
		{\bibfnamefont {H.}~\bibnamefont {Jo}}, \bibinfo {author} {\bibfnamefont
			{Y.}~\bibnamefont {Song}}, \ and\ \bibinfo {author} {\bibfnamefont
			{J.}~\bibnamefont {Ahn}},\ }\href
	{http://www.nature.com/articles/ncomms13317} {\bibfield  {journal} {\bibinfo
			{journal} {Nat. Commun.}\ }\textbf {\bibinfo {volume} {7}},\ \bibinfo {pages} {13317} (\bibinfo {year}
		{2016})}\BibitemShut {NoStop}%
	\bibitem [{\citenamefont {Robens}\ \emph {et~al.}(2017)\citenamefont {Robens},
		\citenamefont {Zopes}, \citenamefont {Alt}, \citenamefont {Brakhane},
		\citenamefont {Meschede},\ and\ \citenamefont {Alberti}}]{Robens2017a}%
	\BibitemOpen
	\bibfield  {author} {\bibinfo {author} {\bibfnamefont {C.}~\bibnamefont
			{Robens}}, \bibinfo {author} {\bibfnamefont {J.}~\bibnamefont {Zopes}},
		\bibinfo {author} {\bibfnamefont {W.}~\bibnamefont {Alt}}, \bibinfo {author}
		{\bibfnamefont {S.}~\bibnamefont {Brakhane}}, \bibinfo {author}
		{\bibfnamefont {D.}~\bibnamefont {Meschede}}, \ and\ \bibinfo {author}
		{\bibfnamefont {A.}~\bibnamefont {Alberti}},\ }\href {\doibase
		10.1103/PhysRevLett.118.065302} {\bibfield  {journal} {\bibinfo  {journal}
			{Phys. Rev. Lett.}\ }\textbf {\bibinfo {volume} {118}},\ \bibinfo {pages}
		{065302} (\bibinfo {year} {2017})}\BibitemShut {NoStop}%
	\bibitem [{\citenamefont {Ohl de Mello~et al.}()}]{OhldeMello}%
	\BibitemOpen
	\bibfield  {author} {\bibinfo {author} {\bibfnamefont {D.}~\bibnamefont {Ohl
				de Mello~et al.}}\ }\href@noop {} {}\bibinfo {note} {(unpublished)}\BibitemShut {NoStop}%
	\bibitem [{\citenamefont {Kaufman}\ \emph {et~al.}(2012)\citenamefont
		{Kaufman}, \citenamefont {Lester},\ and\ \citenamefont
		{Regal}}]{Kaufman2012}%
	\BibitemOpen
	\bibfield  {author} {\bibinfo {author} {\bibfnamefont {A.~M.}\ \bibnamefont
			{Kaufman}}, \bibinfo {author} {\bibfnamefont {B.~J.}\ \bibnamefont {Lester}},
		\ and\ \bibinfo {author} {\bibfnamefont {C.~A.}\ \bibnamefont {Regal}},\
	}\href {\doibase 10.1103/PhysRevX.2.041014} {\bibfield  {journal} {\bibinfo
			{journal} {Phys. Rev. X}\ }\textbf {\bibinfo {volume} {2}},\ \bibinfo {pages}
		{041014} (\bibinfo {year} {2012})}\BibitemShut {NoStop}%
	\bibitem [{\citenamefont {Thompson}\ \emph {et~al.}(2013)\citenamefont
		{Thompson}, \citenamefont {Tiecke}, \citenamefont {Zibrov}, \citenamefont
		{Vuleti{\'c}},\ and\ \citenamefont {Lukin}}]{Thompson2013}%
	\BibitemOpen
	\bibfield  {author} {\bibinfo {author} {\bibfnamefont {J.~D.}\ \bibnamefont
			{Thompson}}, \bibinfo {author} {\bibfnamefont {T.~G.}\ \bibnamefont
			{Tiecke}}, \bibinfo {author} {\bibfnamefont {A.~S.}\ \bibnamefont {Zibrov}},
		\bibinfo {author} {\bibfnamefont {V.}~\bibnamefont {Vuleti{\'c}}}, \ and\
		\bibinfo {author} {\bibfnamefont {M.~D.}\ \bibnamefont {Lukin}},\ }\href
	{\doibase 10.1103/PhysRevLett.110.133001} {\bibfield  {journal} {\bibinfo
			{journal} {Phys. Rev. Lett.}\ }\textbf {\bibinfo {volume} {110}},\ \bibinfo
		{pages} {133001} (\bibinfo {year} {2013})}\BibitemShut {NoStop}%
	\bibitem [{\citenamefont {Olshanii}\ and\ \citenamefont
		{Weiss}(2002)}]{Olshanii2002}%
	\BibitemOpen
	\bibfield  {author} {\bibinfo {author} {\bibfnamefont {M.}~\bibnamefont
			{Olshanii}}\ and\ \bibinfo {author} {\bibfnamefont {D.}~\bibnamefont
			{Weiss}},\ }\href {\doibase 10.1103/PhysRevLett.89.090404} {\bibfield
		{journal} {\bibinfo  {journal} {Phys. Rev. Lett.}\ }\textbf {\bibinfo
			{volume} {89}},\ \bibinfo {pages} {090404} (\bibinfo {year}
		{2002})}\BibitemShut {NoStop}%
	\bibitem [{\citenamefont {Weiss}\ \emph {et~al.}(2004)\citenamefont {Weiss},
		\citenamefont {Vala}, \citenamefont {Thapliyal}, \citenamefont {Myrgren},
		\citenamefont {Vazirani},\ and\ \citenamefont {Whaley}}]{Weiss2004}%
	\BibitemOpen
	\bibfield  {author} {\bibinfo {author} {\bibfnamefont {D.~S.}\ \bibnamefont
			{Weiss}}, \bibinfo {author} {\bibfnamefont {J.}~\bibnamefont {Vala}},
		\bibinfo {author} {\bibfnamefont {A.~V.}\ \bibnamefont {Thapliyal}}, \bibinfo
		{author} {\bibfnamefont {S.}~\bibnamefont {Myrgren}}, \bibinfo {author}
		{\bibfnamefont {U.}~\bibnamefont {Vazirani}}, \ and\ \bibinfo {author}
		{\bibfnamefont {K.~B.}\ \bibnamefont {Whaley}},\ }\href {\doibase
		10.1103/PhysRevA.70.040302} {\bibfield  {journal} {\bibinfo  {journal} {Phys.
				Rev. A}\ }\textbf {\bibinfo {volume} {70}},\ \bibinfo {pages} {040302}
		(\bibinfo {year} {2004})}\BibitemShut {NoStop}%
	\bibitem [{\citenamefont {Bloch}\ \emph {et~al.}(2008)\citenamefont {Bloch},
		\citenamefont {Dalibard},\ and\ \citenamefont {Zwerger}}]{Bloch2008}%
	\BibitemOpen
	\bibfield  {author} {\bibinfo {author} {\bibfnamefont {I.}~\bibnamefont
			{Bloch}}, \bibinfo {author} {\bibfnamefont {J.}~\bibnamefont {Dalibard}}, \
		and\ \bibinfo {author} {\bibfnamefont {W.}~\bibnamefont {Zwerger}},\ }\href
	{\doibase 10.1103/RevModPhys.80.885} {\bibfield  {journal} {\bibinfo
			{journal} {Rev. Mod. Phys.}\ }\textbf {\bibinfo {volume} {80}},\ \bibinfo
		{pages} {885} (\bibinfo {year} {2008})}\BibitemShut {NoStop}%
	\bibitem [{\citenamefont {Lewenstein}\ \emph {et~al.}(2012)\citenamefont
		{Lewenstein}, \citenamefont {Sanpera},\ and\ \citenamefont
		{Ahufinger}}]{Lewenstein2012}%
	\BibitemOpen
	\bibfield  {author} {\bibinfo {author} {\bibfnamefont {M.}~\bibnamefont
			{Lewenstein}}, \bibinfo {author} {\bibfnamefont {A.}~\bibnamefont {Sanpera}},
		\ and\ \bibinfo {author} {\bibfnamefont {V.}~\bibnamefont {Ahufinger}},\
	}\href@noop {} {\emph {\bibinfo {title} {Ultracold Atoms in Optical
				Lattices}}}\ (\bibinfo  {publisher} {Oxford University Press},\ \bibinfo
	{year} {2012})\BibitemShut {NoStop}%
	\bibitem [{\citenamefont {Kaufman}\ \emph {et~al.}(2014)\citenamefont
		{Kaufman}, \citenamefont {Lester}, \citenamefont {Reynolds}, \citenamefont
		{Wall}, \citenamefont {Foss-Feig}, \citenamefont {Hazzard}, \citenamefont
		{Rey},\ and\ \citenamefont {Regal}}]{Kaufman2014}%
	\BibitemOpen
	\bibfield  {author} {\bibinfo {author} {\bibfnamefont {A.}~\bibnamefont
			{Kaufman}}, \bibinfo {author} {\bibfnamefont {B.}~\bibnamefont {Lester}},
		\bibinfo {author} {\bibfnamefont {C.}~\bibnamefont {Reynolds}}, \bibinfo
		{author} {\bibfnamefont {M.}~\bibnamefont {Wall}}, \bibinfo {author}
		{\bibfnamefont {M.}~\bibnamefont {Foss-Feig}}, \bibinfo {author}
		{\bibfnamefont {K.}~\bibnamefont {Hazzard}}, \bibinfo {author} {\bibfnamefont
			{A.}~\bibnamefont {Rey}}, \ and\ \bibinfo {author} {\bibfnamefont
			{C.}~\bibnamefont {Regal}},\ }\href
	{http://www.sciencemag.org/content/345/6194/306.short} {\bibfield  {journal}
		{\bibinfo  {journal} {Science}\ }\textbf {\bibinfo {volume} {345}},\ \bibinfo
		{pages} {306} (\bibinfo {year} {2014})}\BibitemShut {NoStop}%
	\bibitem [{\citenamefont {Murmann}\ \emph {et~al.}(2015)\citenamefont
		{Murmann}, \citenamefont {Bergschneider}, \citenamefont {Klinkhamer},
		\citenamefont {Z\"urn}, \citenamefont {Lompe},\ and\ \citenamefont
		{Jochim}}]{Murmann2015}%
	\BibitemOpen
	\bibfield  {author} {\bibinfo {author} {\bibfnamefont {S.}~\bibnamefont
			{Murmann}}, \bibinfo {author} {\bibfnamefont {A.}~\bibnamefont
			{Bergschneider}}, \bibinfo {author} {\bibfnamefont {V.~M.}\ \bibnamefont
			{Klinkhamer}}, \bibinfo {author} {\bibfnamefont {G.}~\bibnamefont {Z\"urn}},
		\bibinfo {author} {\bibfnamefont {T.}~\bibnamefont {Lompe}}, \ and\ \bibinfo
		{author} {\bibfnamefont {S.}~\bibnamefont {Jochim}},\ }\href {\doibase
		10.1103/PhysRevLett.114.080402} {\bibfield  {journal} {\bibinfo  {journal}
			{Phys. Rev. Lett.}\ }\textbf {\bibinfo {volume} {114}},\ \bibinfo {pages}
		{080402} (\bibinfo {year} {2015})}\BibitemShut {NoStop}%
	\bibitem [{\citenamefont {Serwane}\ \emph {et~al.}(2011)\citenamefont
		{Serwane}, \citenamefont {Z{\"u}rn}, \citenamefont {Lompe}, \citenamefont
		{Ottenstein}, \citenamefont {Wenz},\ and\ \citenamefont
		{Jochim}}]{Serwane2011}%
	\BibitemOpen
	\bibfield  {author} {\bibinfo {author} {\bibfnamefont {F.}~\bibnamefont
			{Serwane}}, \bibinfo {author} {\bibfnamefont {G.}~\bibnamefont {Z{\"u}rn}},
		\bibinfo {author} {\bibfnamefont {T.}~\bibnamefont {Lompe}}, \bibinfo
		{author} {\bibfnamefont {T.}~\bibnamefont {Ottenstein}}, \bibinfo {author}
		{\bibfnamefont {A.}~\bibnamefont {Wenz}}, \ and\ \bibinfo {author}
		{\bibfnamefont {S.}~\bibnamefont {Jochim}},\ }\href
	{https://www.sciencemag.org/content/332/6027/336.full} {\bibfield  {journal}
		{\bibinfo  {journal} {Science}\ }\textbf {\bibinfo {volume} {332}},\ \bibinfo
		{pages} {336} (\bibinfo {year} {2011})}\BibitemShut {NoStop}%
	\bibitem [{\citenamefont {Sturm}\ \emph {et~al.}(2017)\citenamefont {Sturm},
		\citenamefont {Schlosser}, \citenamefont {Walser},\ and\ \citenamefont
		{Birkl}}]{Sturm2017}%
	\BibitemOpen
	\bibfield  {author} {\bibinfo {author} {\bibfnamefont {M.~R.}\ \bibnamefont
			{Sturm}}, \bibinfo {author} {\bibfnamefont {M.}~\bibnamefont {Schlosser}},
		\bibinfo {author} {\bibfnamefont {R.}~\bibnamefont {Walser}}, \ and\ \bibinfo
		{author} {\bibfnamefont {G.}~\bibnamefont {Birkl}},\ }\href {\doibase
		10.1103/PhysRevA.95.063625} {\bibfield  {journal} {\bibinfo  {journal} {Phys.
				Rev. A}\ }\textbf {\bibinfo {volume} {95}},\ \bibinfo {pages} {063625}
		(\bibinfo {year} {2017})}\BibitemShut {NoStop}%
	\bibitem [{\citenamefont {Born}\ and\ \citenamefont {Fock}(1928)}]{Born1928}%
	\BibitemOpen
	\bibfield  {author} {\bibinfo {author} {\bibfnamefont {M.}~\bibnamefont
			{Born}}\ and\ \bibinfo {author} {\bibfnamefont {V.}~\bibnamefont {Fock}},\
	}\href {\doibase 10.1007/BF01343193} {\bibfield  {journal} {\bibinfo
			{journal} {Zeitschrift f{\"u}r Physik}\ }\textbf {\bibinfo {volume} {51}},\
		\bibinfo {pages} {165} (\bibinfo {year} {1928})}\BibitemShut {NoStop}%
	\bibitem [{\citenamefont {Messiah}(1961)}]{Messiah1961}%
	\BibitemOpen
	\bibfield  {author} {\bibinfo {author} {\bibfnamefont {A.}~\bibnamefont
			{Messiah}},\ }\href@noop {} {\emph {\bibinfo {title} {Quantum Mechanics}}},\
	Vol.~\bibinfo {volume} {2}\ (\bibinfo  {publisher} {North-Holland},\ \bibinfo
	{year} {1961})\BibitemShut {NoStop}%
	\bibitem [{\citenamefont {Marzlin}\ and\ \citenamefont
		{Sanders}(2004)}]{Marzlin2004}%
	\BibitemOpen
	\bibfield  {author} {\bibinfo {author} {\bibfnamefont {K.-P.}\ \bibnamefont
			{Marzlin}}\ and\ \bibinfo {author} {\bibfnamefont {B.~C.}\ \bibnamefont
			{Sanders}},\ }\href {\doibase 10.1103/PhysRevLett.93.160408} {\bibfield
		{journal} {\bibinfo  {journal} {Phys. Rev. Lett.}\ }\textbf {\bibinfo
			{volume} {93}},\ \bibinfo {pages} {160408} (\bibinfo {year}
		{2004})}\BibitemShut {NoStop}%
	\bibitem [{\citenamefont {Amin}(2009)}]{Amin2009}%
	\BibitemOpen
	\bibfield  {author} {\bibinfo {author} {\bibfnamefont {M.~H.~S.}\
			\bibnamefont {Amin}},\ }\href {\doibase 10.1103/PhysRevLett.102.220401}
	{\bibfield  {journal} {\bibinfo  {journal} {Phys. Rev. Lett.}\ }\textbf
		{\bibinfo {volume} {102}},\ \bibinfo {pages} {220401} (\bibinfo {year}
		{2009})}\BibitemShut {NoStop}%
	\bibitem [{\citenamefont {Rezakhani}\ \emph {et~al.}(2009)\citenamefont
		{Rezakhani}, \citenamefont {Kuo}, \citenamefont {Hamma}, \citenamefont
		{Lidar},\ and\ \citenamefont {Zanardi}}]{Rezakhani2009}%
	\BibitemOpen
	\bibfield  {author} {\bibinfo {author} {\bibfnamefont {A.~T.}\ \bibnamefont
			{Rezakhani}}, \bibinfo {author} {\bibfnamefont {W.-J.}\ \bibnamefont {Kuo}},
		\bibinfo {author} {\bibfnamefont {A.}~\bibnamefont {Hamma}}, \bibinfo
		{author} {\bibfnamefont {D.~A.}\ \bibnamefont {Lidar}}, \ and\ \bibinfo
		{author} {\bibfnamefont {P.}~\bibnamefont {Zanardi}},\ }\href {\doibase
		10.1103/PhysRevLett.103.080502} {\bibfield  {journal} {\bibinfo  {journal}
			{Phys. Rev. Lett.}\ }\textbf {\bibinfo {volume} {103}},\ \bibinfo {pages}
		{080502} (\bibinfo {year} {2009})}\BibitemShut {NoStop}%
	\bibitem [{\citenamefont {Baum}\ \emph {et~al.}(1985)\citenamefont {Baum},
		\citenamefont {Tycko},\ and\ \citenamefont {Pines}}]{Baum1985}%
	\BibitemOpen
	\bibfield  {author} {\bibinfo {author} {\bibfnamefont {J.}~\bibnamefont
			{Baum}}, \bibinfo {author} {\bibfnamefont {R.}~\bibnamefont {Tycko}}, \ and\
		\bibinfo {author} {\bibfnamefont {A.}~\bibnamefont {Pines}},\ }\href
	{\doibase 10.1103/PhysRevA.32.3435} {\bibfield  {journal} {\bibinfo
			{journal} {Phys. Rev. A}\ }\textbf {\bibinfo {volume} {32}},\ \bibinfo
		{pages} {3435} (\bibinfo {year} {1985})}\BibitemShut {NoStop}%
	\bibitem [{\citenamefont {Pethick}\ and\ \citenamefont
		{Smith}(2002)}]{Pethick2002}%
	\BibitemOpen
	\bibfield  {author} {\bibinfo {author} {\bibfnamefont {C.~J.}\ \bibnamefont
			{Pethick}}\ and\ \bibinfo {author} {\bibfnamefont {H.}~\bibnamefont
			{Smith}},\ }\href@noop {} {\emph {\bibinfo {title} {{{B}ose-{E}instein}
				condensation in dilute gases}}}\ (\bibinfo  {publisher} {Cambridge university
		press},\ \bibinfo {year} {2002})\BibitemShut {NoStop}%
	\bibitem [{\citenamefont {Proukakis}\ \emph {et~al.}(2013)\citenamefont
		{Proukakis}, \citenamefont {Gardiner}, \citenamefont {Davis},\ and\
		\citenamefont {Szyma{\'n}ska}}]{Proukakis2013}%
	\BibitemOpen
	\bibfield  {author} {\bibinfo {author} {\bibfnamefont {N.}~\bibnamefont
			{Proukakis}}, \bibinfo {author} {\bibfnamefont {S.}~\bibnamefont {Gardiner}},
		\bibinfo {author} {\bibfnamefont {M.}~\bibnamefont {Davis}}, \ and\ \bibinfo
		{author} {\bibfnamefont {M.}~\bibnamefont {Szyma{\'n}ska}},\ }\href@noop {}
	{\emph {\bibinfo {title} {Quantum Gases: Finite temperature and
				non-equilibrium dynamics}}},\ Vol.~\bibinfo {volume} {1}\ (\bibinfo
	{publisher} {World Scientific},\ \bibinfo {year} {2013})\BibitemShut
	{NoStop}%
	\bibitem [{\citenamefont {Gardiner}\ and\ \citenamefont
		{Zoller}(2017)}]{Gardiner2017}%
	\BibitemOpen
	\bibfield  {author} {\bibinfo {author} {\bibfnamefont {C.~W.}\ \bibnamefont
			{Gardiner}}\ and\ \bibinfo {author} {\bibfnamefont {P.}~\bibnamefont
			{Zoller}},\ }\href@noop {} {\emph {\bibinfo {title} {The Quantum World of
				Ultra-Cold Atoms and Light Book III: Ultra-Cold Atoms}}}\ (\bibinfo
	{publisher} {World Scientific},\ \bibinfo {year} {2017})\BibitemShut
	{NoStop}%
	\bibitem [{\citenamefont {Kohn}(1959)}]{Kohn1959}%
	\BibitemOpen
	\bibfield  {author} {\bibinfo {author} {\bibfnamefont {W.}~\bibnamefont
			{Kohn}},\ }\href {\doibase 10.1103/PhysRev.115.809} {\bibfield  {journal}
		{\bibinfo  {journal} {Phys. Rev.}\ }\textbf {\bibinfo {volume} {115}},\
		\bibinfo {pages} {809} (\bibinfo {year} {1959})}\BibitemShut {NoStop}%
	\bibitem [{\citenamefont {Walters}\ \emph {et~al.}(2013)\citenamefont
		{Walters}, \citenamefont {Cotugno}, \citenamefont {Johnson}, \citenamefont
		{Clark},\ and\ \citenamefont {Jaksch}}]{Walters2013}%
	\BibitemOpen
	\bibfield  {author} {\bibinfo {author} {\bibfnamefont {R.}~\bibnamefont
			{Walters}}, \bibinfo {author} {\bibfnamefont {G.}~\bibnamefont {Cotugno}},
		\bibinfo {author} {\bibfnamefont {T.~H.}\ \bibnamefont {Johnson}}, \bibinfo
		{author} {\bibfnamefont {S.~R.}\ \bibnamefont {Clark}}, \ and\ \bibinfo
		{author} {\bibfnamefont {D.}~\bibnamefont {Jaksch}},\ }\href {\doibase
		10.1103/PhysRevA.87.043613} {\bibfield  {journal} {\bibinfo  {journal} {Phys.
				Rev. A}\ }\textbf {\bibinfo {volume} {87}},\ \bibinfo {pages} {043613}
		(\bibinfo {year} {2013})}\BibitemShut {NoStop}%
	\bibitem [{\citenamefont {Pichler}\ \emph {et~al.}(2013)\citenamefont
		{Pichler}, \citenamefont {Schachenmayer}, \citenamefont {Daley},\ and\
		\citenamefont {Zoller}}]{Pichler2013}%
	\BibitemOpen
	\bibfield  {author} {\bibinfo {author} {\bibfnamefont {H.}~\bibnamefont
			{Pichler}}, \bibinfo {author} {\bibfnamefont {J.}~\bibnamefont
			{Schachenmayer}}, \bibinfo {author} {\bibfnamefont {A.~J.}\ \bibnamefont
			{Daley}}, \ and\ \bibinfo {author} {\bibfnamefont {P.}~\bibnamefont
			{Zoller}},\ }\href {\doibase 10.1103/PhysRevA.87.033606} {\bibfield
		{journal} {\bibinfo  {journal} {Phys. Rev. A}\ }\textbf {\bibinfo {volume}
			{87}},\ \bibinfo {pages} {033606} (\bibinfo {year} {2013})}\BibitemShut
	{NoStop}%
	\bibitem [{\citenamefont {Dutta}\ \emph {et~al.}(2015)\citenamefont {Dutta},
		\citenamefont {Gajda}, \citenamefont {Hauke}, \citenamefont {Lewenstein},
		\citenamefont {L{\"{u}}hmann}, \citenamefont {Malomed}, \citenamefont
		{Sowiński},\ and\ \citenamefont {Zakrzewski}}]{Dutta2015}%
	\BibitemOpen
	\bibfield  {author} {\bibinfo {author} {\bibfnamefont {O.}~\bibnamefont
			{Dutta}}, \bibinfo {author} {\bibfnamefont {M.}~\bibnamefont {Gajda}},
		\bibinfo {author} {\bibfnamefont {P.}~\bibnamefont {Hauke}}, \bibinfo
		{author} {\bibfnamefont {M.}~\bibnamefont {Lewenstein}}, \bibinfo {author}
		{\bibfnamefont {D.-S.}\ \bibnamefont {L{\"{u}}hmann}}, \bibinfo {author}
		{\bibfnamefont {B.~A.}\ \bibnamefont {Malomed}}, \bibinfo {author}
		{\bibfnamefont {T.}~\bibnamefont {Sowiński}}, \ and\ \bibinfo {author}
		{\bibfnamefont {J.}~\bibnamefont {Zakrzewski}},\ }\href
	{http://stacks.iop.org/0034-4885/78/i=6/a=066001} {\bibfield  {journal}
		{\bibinfo  {journal} {Rep. Prog. Phys.}\ }\textbf {\bibinfo {volume} {78}},\
		\bibinfo {pages} {066001} (\bibinfo {year} {2015})}\BibitemShut {NoStop}%
	\bibitem [{\citenamefont {\L{}\k{a}cki}\ \emph {et~al.}(2013)\citenamefont
		{\L{}\k{a}cki}, \citenamefont {Delande},\ and\ \citenamefont
		{Zakrzewski}}]{Lacki2013}%
	\BibitemOpen
	\bibfield  {author} {\bibinfo {author} {\bibfnamefont {M.}~\bibnamefont
			{\L{}\k{a}cki}}, \bibinfo {author} {\bibfnamefont {D.}~\bibnamefont
			{Delande}}, \ and\ \bibinfo {author} {\bibfnamefont {J.}~\bibnamefont
			{Zakrzewski}},\ }\href {http://stacks.iop.org/1367-2630/15/i=1/a=013062}
	{\bibfield  {journal} {\bibinfo  {journal} {New J. Phys.}\ }\textbf {\bibinfo
			{volume} {15}},\ \bibinfo {pages} {013062} (\bibinfo {year}
		{2013})}\BibitemShut {NoStop}%
	\bibitem [{\citenamefont {Chen}\ \emph {et~al.}(2010)\citenamefont {Chen},
		\citenamefont {Ruschhaupt}, \citenamefont {Schmidt},\ and\ \citenamefont
		{Muga}}]{Chen2010}%
	\BibitemOpen
	\bibfield  {author} {\bibinfo {author} {\bibfnamefont {X.}~\bibnamefont
			{Chen}}, \bibinfo {author} {\bibfnamefont {A.}~\bibnamefont {Ruschhaupt}},
		\bibinfo {author} {\bibfnamefont {S.}~\bibnamefont {Schmidt}}, \ and\
		\bibinfo {author} {\bibfnamefont {J.}~\bibnamefont {Muga}},\ }\href@noop {}
	{\bibfield  {journal} {\bibinfo  {journal} {J. At. Mol. Sci.}\ }\textbf
		{\bibinfo {volume} {1}},\ \bibinfo {pages} {1-17} (\bibinfo {year}
		{2010})}\BibitemShut {NoStop}%
	\bibitem [{\citenamefont {{H}ubbard}(1963)}]{Hubbard1963}%
	\BibitemOpen
	\bibfield  {author} {\bibinfo {author} {\bibfnamefont {J.}~\bibnamefont
			{{H}ubbard}},\ }in\ \href
	{http://rspa.royalsocietypublishing.org/content/276/1365/238.article-info}
	{\emph {\bibinfo {booktitle} {Proceedings of the Royal Society of London A:
				Mathematical, Physical and Engineering Sciences}}},\ Vol.\ \bibinfo {volume}
	{276}\ (\bibinfo {organization} {The Royal Society},\ \bibinfo {year}
	{1963})\ pp.\ \bibinfo {pages} {238--257}\BibitemShut {NoStop}%
	\bibitem [{\citenamefont {Fisher}\ \emph {et~al.}(1989)\citenamefont {Fisher},
		\citenamefont {Weichman}, \citenamefont {Grinstein},\ and\ \citenamefont
		{Fisher}}]{Fisher1989}%
	\BibitemOpen
	\bibfield  {author} {\bibinfo {author} {\bibfnamefont {M.~P.~A.}\
			\bibnamefont {Fisher}}, \bibinfo {author} {\bibfnamefont {P.~B.}\
			\bibnamefont {Weichman}}, \bibinfo {author} {\bibfnamefont {G.}~\bibnamefont
			{Grinstein}}, \ and\ \bibinfo {author} {\bibfnamefont {D.~S.}\ \bibnamefont
			{Fisher}},\ }\href {\doibase 10.1103/PhysRevB.40.546} {\bibfield  {journal}
		{\bibinfo  {journal} {Phys. Rev. B}\ }\textbf {\bibinfo {volume} {40}},\
		\bibinfo {pages} {546} (\bibinfo {year} {1989})}\BibitemShut {NoStop}%
	\bibitem [{\citenamefont {Jaksch}\ \emph {et~al.}(1998)\citenamefont {Jaksch},
		\citenamefont {Bruder}, \citenamefont {Cirac}, \citenamefont {Gardiner},\
		and\ \citenamefont {Zoller}}]{Jaksch1998}%
	\BibitemOpen
	\bibfield  {author} {\bibinfo {author} {\bibfnamefont {D.}~\bibnamefont
			{Jaksch}}, \bibinfo {author} {\bibfnamefont {C.}~\bibnamefont {Bruder}},
		\bibinfo {author} {\bibfnamefont {J.~I.}\ \bibnamefont {Cirac}}, \bibinfo
		{author} {\bibfnamefont {C.~W.}\ \bibnamefont {Gardiner}}, \ and\ \bibinfo
		{author} {\bibfnamefont {P.}~\bibnamefont {Zoller}},\ }\href {\doibase
		10.1103/PhysRevLett.81.3108} {\bibfield  {journal} {\bibinfo  {journal}
			{Phys. Rev. Lett.}\ }\textbf {\bibinfo {volume} {81}},\ \bibinfo {pages}
		{3108} (\bibinfo {year} {1998})}\BibitemShut {NoStop}%
	\bibitem [{\citenamefont {Freericks}\ and\ \citenamefont
		{Monien}(1994)}]{Freericks1994}%
	\BibitemOpen
	\bibfield  {author} {\bibinfo {author} {\bibfnamefont {J.~K.}\ \bibnamefont
			{Freericks}}\ and\ \bibinfo {author} {\bibfnamefont {H.}~\bibnamefont
			{Monien}},\ }\href {http://stacks.iop.org/0295-5075/26/i=7/a=012} {\bibfield
		{journal} {\bibinfo  {journal} {Europhys. Lett.}\ }\textbf {\bibinfo {volume}
			{26}},\ \bibinfo {pages} {545} (\bibinfo {year} {1994})}\BibitemShut
	{NoStop}%
	\bibitem [{\citenamefont {\L{}\k{a}cki}\ and\ \citenamefont
		{Zakrzewski}(2013)}]{Lacki2013a}%
	\BibitemOpen
	\bibfield  {author} {\bibinfo {author} {\bibfnamefont {M.}~\bibnamefont
			{\L{}\k{a}cki}}\ and\ \bibinfo {author} {\bibfnamefont {J.}~\bibnamefont
			{Zakrzewski}},\ }\href {\doibase 10.1103/PhysRevLett.110.065301} {\bibfield
		{journal} {\bibinfo  {journal} {Phys. Rev. Lett.}\ }\textbf {\bibinfo
			{volume} {110}},\ \bibinfo {pages} {065301} (\bibinfo {year}
		{2013})}\BibitemShut {NoStop}%
	\bibitem [{\citenamefont {Lester}\ \emph {et~al.}(2015)\citenamefont {Lester},
		\citenamefont {Luick}, \citenamefont {Kaufman}, \citenamefont {Reynolds},\
		and\ \citenamefont {Regal}}]{Lester2015}%
	\BibitemOpen
	\bibfield  {author} {\bibinfo {author} {\bibfnamefont {B.~J.}\ \bibnamefont
			{Lester}}, \bibinfo {author} {\bibfnamefont {N.}~\bibnamefont {Luick}},
		\bibinfo {author} {\bibfnamefont {A.~M.}\ \bibnamefont {Kaufman}}, \bibinfo
		{author} {\bibfnamefont {C.~M.}\ \bibnamefont {Reynolds}}, \ and\ \bibinfo
		{author} {\bibfnamefont {C.~A.}\ \bibnamefont {Regal}},\ }\href {\doibase
		10.1103/PhysRevLett.115.073003} {\bibfield  {journal} {\bibinfo  {journal}
			{Phys. Rev. Lett.}\ }\textbf {\bibinfo {volume} {115}},\ \bibinfo {pages}
		{073003} (\bibinfo {year} {2015})}\BibitemShut {NoStop}%
	\bibitem [{\citenamefont {Nogrette}\ \emph {et~al.}(2014)\citenamefont
		{Nogrette}, \citenamefont {Labuhn}, \citenamefont {Ravets}, \citenamefont
		{Barredo}, \citenamefont {B\'eguin}, \citenamefont {Vernier}, \citenamefont
		{Lahaye},\ and\ \citenamefont {Browaeys}}]{Nogrette2014}%
	\BibitemOpen
	\bibfield  {author} {\bibinfo {author} {\bibfnamefont {F.}~\bibnamefont
			{Nogrette}}, \bibinfo {author} {\bibfnamefont {H.}~\bibnamefont {Labuhn}},
		\bibinfo {author} {\bibfnamefont {S.}~\bibnamefont {Ravets}}, \bibinfo
		{author} {\bibfnamefont {D.}~\bibnamefont {Barredo}}, \bibinfo {author}
		{\bibfnamefont {L.}~\bibnamefont {B\'eguin}}, \bibinfo {author}
		{\bibfnamefont {A.}~\bibnamefont {Vernier}}, \bibinfo {author} {\bibfnamefont
			{T.}~\bibnamefont {Lahaye}}, \ and\ \bibinfo {author} {\bibfnamefont
			{A.}~\bibnamefont {Browaeys}},\ }\href {\doibase 10.1103/PhysRevX.4.021034}
	{\bibfield  {journal} {\bibinfo  {journal} {Phys. Rev. X}\ }\textbf {\bibinfo
			{volume} {4}},\ \bibinfo {pages} {021034} (\bibinfo {year}
		{2014})}\BibitemShut {NoStop}%
	\bibitem [{\citenamefont {Birkl}\ \emph {et~al.}(2001)\citenamefont {Birkl},
		\citenamefont {Buchkremer}, \citenamefont {Dumke},\ and\ \citenamefont
		{Ertmer}}]{Birkl2001}%
	\BibitemOpen
	\bibfield  {author} {\bibinfo {author} {\bibfnamefont {G.}~\bibnamefont
			{Birkl}}, \bibinfo {author} {\bibfnamefont {F.~B.~J.}\ \bibnamefont
			{Buchkremer}}, \bibinfo {author} {\bibfnamefont {R.}~\bibnamefont {Dumke}}, \
		and\ \bibinfo {author} {\bibfnamefont {W.}~\bibnamefont {Ertmer}},\ }\href
	{\doibase http://dx.doi.org/10.1016/S0030-4018(01)01107-5} {\bibfield
		{journal} {\bibinfo  {journal} {Opt. Commun.}\ }\textbf {\bibinfo {volume}
			{191}},\ \bibinfo {pages} {67 } (\bibinfo {year} {2001})}\BibitemShut
	{NoStop}%
	\bibitem [{\citenamefont {Schlosser}\ \emph {et~al.}(2011)\citenamefont
		{Schlosser}, \citenamefont {Tichelmann}, \citenamefont {Kruse},\ and\
		\citenamefont {Birkl}}]{Schlosser2011}%
	\BibitemOpen
	\bibfield  {author} {\bibinfo {author} {\bibfnamefont {M.}~\bibnamefont
			{Schlosser}}, \bibinfo {author} {\bibfnamefont {S.}~\bibnamefont
			{Tichelmann}}, \bibinfo {author} {\bibfnamefont {J.}~\bibnamefont {Kruse}}, \
		and\ \bibinfo {author} {\bibfnamefont {G.}~\bibnamefont {Birkl}},\ }\href
	{\doibase 10.1007/s11128-011-0297-z} {\bibfield  {journal} {\bibinfo
			{journal} {Quantum Inf. Process.}\ }\textbf {\bibinfo {volume} {10}},\
		\bibinfo {pages} {907} (\bibinfo {year} {2011})}\BibitemShut {NoStop}%
	\bibitem [{\citenamefont {Zwerger}(2003)}]{Zwerger2003}%
	\BibitemOpen
	\bibfield  {author} {\bibinfo {author} {\bibfnamefont {W.}~\bibnamefont
			{Zwerger}},\ }\href {http://stacks.iop.org/1464-4266/5/i=2/a=352} {\bibfield
		{journal} {\bibinfo  {journal} {J. Opt. B: Quantum Semiclassical Opt.}\
		}\textbf {\bibinfo {volume} {5}},\ \bibinfo {pages} {S9} (\bibinfo {year}
		{2003})}\BibitemShut {NoStop}%
	\bibitem [{\citenamefont {Gerbier}\ \emph {et~al.}(2005)\citenamefont
		{Gerbier}, \citenamefont {Widera}, \citenamefont {F\"olling}, \citenamefont
		{Mandel}, \citenamefont {Gericke},\ and\ \citenamefont
		{Bloch}}]{Gerbier2005}%
	\BibitemOpen
	\bibfield  {author} {\bibinfo {author} {\bibfnamefont {F.}~\bibnamefont
			{Gerbier}}, \bibinfo {author} {\bibfnamefont {A.}~\bibnamefont {Widera}},
		\bibinfo {author} {\bibfnamefont {S.}~\bibnamefont {F\"olling}}, \bibinfo
		{author} {\bibfnamefont {O.}~\bibnamefont {Mandel}}, \bibinfo {author}
		{\bibfnamefont {T.}~\bibnamefont {Gericke}}, \ and\ \bibinfo {author}
		{\bibfnamefont {I.}~\bibnamefont {Bloch}},\ }\href {\doibase
		10.1103/PhysRevA.72.053606} {\bibfield  {journal} {\bibinfo  {journal} {Phys.
				Rev. A}\ }\textbf {\bibinfo {volume} {72}},\ \bibinfo {pages} {053606}
		(\bibinfo {year} {2005})}\BibitemShut {NoStop}%
	\bibitem [{\citenamefont {Lee}\ \emph {et~al.}(2009)\citenamefont {Lee},
		\citenamefont {Gr\'emaud}, \citenamefont {Han}, \citenamefont {Englert},\
		and\ \citenamefont {Miniatura}}]{Lee2009}%
	\BibitemOpen
	\bibfield  {author} {\bibinfo {author} {\bibfnamefont {K.~L.}\ \bibnamefont
			{Lee}}, \bibinfo {author} {\bibfnamefont {B.}~\bibnamefont {Gr\'emaud}},
		\bibinfo {author} {\bibfnamefont {R.}~\bibnamefont {Han}}, \bibinfo {author}
		{\bibfnamefont {B.-G.}\ \bibnamefont {Englert}}, \ and\ \bibinfo {author}
		{\bibfnamefont {C.}~\bibnamefont {Miniatura}},\ }\href {\doibase
		10.1103/PhysRevA.80.043411} {\bibfield  {journal} {\bibinfo  {journal} {Phys.
				Rev. A}\ }\textbf {\bibinfo {volume} {80}},\ \bibinfo {pages} {043411}
		(\bibinfo {year} {2009})}\BibitemShut {NoStop}%
	\bibitem [{Note1()}]{Note1}%
	\BibitemOpen
	\bibinfo {note} {We thank an anonymous referee for the interesting question
		about the scale invariance of the optimal ramp shape.}\BibitemShut {Stop}%
	\bibitem [{\citenamefont {Gordon}\ and\ \citenamefont
		{Ashkin}(1980)}]{Gordon1980}%
	\BibitemOpen
	\bibfield  {author} {\bibinfo {author} {\bibfnamefont {J.~P.}\ \bibnamefont
			{Gordon}}\ and\ \bibinfo {author} {\bibfnamefont {A.}~\bibnamefont
			{Ashkin}},\ }\href {\doibase 10.1103/PhysRevA.21.1606} {\bibfield  {journal}
		{\bibinfo  {journal} {Phys. Rev. A}\ }\textbf {\bibinfo {volume} {21}},\
		\bibinfo {pages} {1606} (\bibinfo {year} {1980})}\BibitemShut {NoStop}%
	\bibitem [{\citenamefont {Dalibard}\ and\ \citenamefont
		{Cohen-Tannoudji}(1985)}]{Dalibard1985}%
	\BibitemOpen
	\bibfield  {author} {\bibinfo {author} {\bibfnamefont {J.}~\bibnamefont
			{Dalibard}}\ and\ \bibinfo {author} {\bibfnamefont {C.}~\bibnamefont
			{Cohen-Tannoudji}},\ }\href {http://stacks.iop.org/0022-3700/18/i=8/a=019}
	{\bibfield  {journal} {\bibinfo  {journal} {J. Phys. B: At. Mol. Phys.}\
		}\textbf {\bibinfo {volume} {18}},\ \bibinfo {pages} {1661} (\bibinfo {year}
		{1985})}\BibitemShut {NoStop}%
	\bibitem [{\citenamefont {Pichler}\ \emph {et~al.}(2010)\citenamefont
		{Pichler}, \citenamefont {Daley},\ and\ \citenamefont
		{Zoller}}]{Pichler2010}%
	\BibitemOpen
	\bibfield  {author} {\bibinfo {author} {\bibfnamefont {H.}~\bibnamefont
			{Pichler}}, \bibinfo {author} {\bibfnamefont {A.~J.}\ \bibnamefont {Daley}},
		\ and\ \bibinfo {author} {\bibfnamefont {P.}~\bibnamefont {Zoller}},\ }\href
	{\doibase 10.1103/PhysRevA.82.063605} {\bibfield  {journal} {\bibinfo
			{journal} {Phys. Rev. A}\ }\textbf {\bibinfo {volume} {82}},\ \bibinfo
		{pages} {063605} (\bibinfo {year} {2010})}\BibitemShut {NoStop}%
	\bibitem [{\citenamefont {Gerbier}\ and\ \citenamefont
		{Castin}(2010)}]{Gerbier2010}%
	\BibitemOpen
	\bibfield  {author} {\bibinfo {author} {\bibfnamefont {F.}~\bibnamefont
			{Gerbier}}\ and\ \bibinfo {author} {\bibfnamefont {Y.}~\bibnamefont
			{Castin}},\ }\href {\doibase 10.1103/PhysRevA.82.013615} {\bibfield
		{journal} {\bibinfo  {journal} {Phys. Rev. A}\ }\textbf {\bibinfo {volume}
			{82}},\ \bibinfo {pages} {013615} (\bibinfo {year} {2010})}\BibitemShut
	{NoStop}%
	\bibitem [{\citenamefont {Schachenmayer}\ \emph {et~al.}(2014)\citenamefont
		{Schachenmayer}, \citenamefont {Pollet}, \citenamefont {Troyer},\ and\
		\citenamefont {Daley}}]{Schachenmayer2014}%
	\BibitemOpen
	\bibfield  {author} {\bibinfo {author} {\bibfnamefont {J.}~\bibnamefont
			{Schachenmayer}}, \bibinfo {author} {\bibfnamefont {L.}~\bibnamefont
			{Pollet}}, \bibinfo {author} {\bibfnamefont {M.}~\bibnamefont {Troyer}}, \
		and\ \bibinfo {author} {\bibfnamefont {A.~J.}\ \bibnamefont {Daley}},\ }\href
	{\doibase 10.1103/PhysRevA.89.011601} {\bibfield  {journal} {\bibinfo
			{journal} {Phys. Rev. A}\ }\textbf {\bibinfo {volume} {89}},\ \bibinfo
		{pages} {011601} (\bibinfo {year} {2014})}\BibitemShut {NoStop}%
	\bibitem [{\citenamefont {Doria}\ \emph {et~al.}(2011)\citenamefont {Doria},
		\citenamefont {Calarco},\ and\ \citenamefont {Montangero}}]{Doria2011}%
	\BibitemOpen
	\bibfield  {author} {\bibinfo {author} {\bibfnamefont {P.}~\bibnamefont
			{Doria}}, \bibinfo {author} {\bibfnamefont {T.}~\bibnamefont {Calarco}}, \
		and\ \bibinfo {author} {\bibfnamefont {S.}~\bibnamefont {Montangero}},\
	}\href {\doibase 10.1103/PhysRevLett.106.190501} {\bibfield  {journal}
		{\bibinfo  {journal} {Phys. Rev. Lett.}\ }\textbf {\bibinfo {volume} {106}},\
		\bibinfo {pages} {190501} (\bibinfo {year} {2011})}\BibitemShut {NoStop}%
	\bibitem [{\citenamefont {Hu}\ \emph {et~al.}(2017)\citenamefont {Hu},
		\citenamefont {Urvoy}, \citenamefont {Vendeiro}, \citenamefont {Cr{\'e}pel},
		\citenamefont {Chen},\ and\ \citenamefont {Vuleti{\'c}}}]{Hu2017}%
	\BibitemOpen
	\bibfield  {author} {\bibinfo {author} {\bibfnamefont {J.}~\bibnamefont
			{Hu}}, \bibinfo {author} {\bibfnamefont {A.}~\bibnamefont {Urvoy}}, \bibinfo
		{author} {\bibfnamefont {Z.}~\bibnamefont {Vendeiro}}, \bibinfo {author}
		{\bibfnamefont {V.}~\bibnamefont {Cr{\'e}pel}}, \bibinfo {author}
		{\bibfnamefont {W.}~\bibnamefont {Chen}}, \ and\ \bibinfo {author}
		{\bibfnamefont {V.}~\bibnamefont {Vuleti{\'c}}},\ }\href@noop {} {\bibfield
		{journal} {\bibinfo  {journal} {Science}\ }\textbf {\bibinfo {volume}
			{358}},\ \bibinfo {pages} {1078} (\bibinfo {year} {2017})}\BibitemShut
	{NoStop}%
	\bibitem [{\citenamefont {L\'opez}\ \emph {et~al.}(2012)\citenamefont
		{L\'opez}, \citenamefont {Molgado},\ and\ \citenamefont
		{Vallejo}}]{Lopez2012}%
	\BibitemOpen
	\bibfield  {author} {\bibinfo {author} {\bibfnamefont {E.}~\bibnamefont
			{L\'opez}}, \bibinfo {author} {\bibfnamefont {A.}~\bibnamefont {Molgado}}, \
		and\ \bibinfo {author} {\bibfnamefont {J.~A.}\ \bibnamefont {Vallejo}},\
	}\href {http://eudml.org/doc/251396} {\bibfield  {journal} {\bibinfo
			{journal} {Communications in Mathematics}\ }\textbf {\bibinfo {volume}
			{20}},\ \bibinfo {pages} {89-116} (\bibinfo {year} {2012})}\BibitemShut {NoStop}%
	\bibitem [{\citenamefont {Nandi}(2010)}]{Nandi2010}%
	\BibitemOpen
	\bibfield  {author} {\bibinfo {author} {\bibfnamefont {S.}~\bibnamefont
			{Nandi}},\ }\href {\doibase 10.1119/1.3474665} {\bibfield  {journal}
		{\bibinfo  {journal} {Am. J. Phys}\ }\textbf {\bibinfo {volume} {78}},\
		\bibinfo {pages} {1341} (\bibinfo {year} {2010})}\BibitemShut {NoStop}%
	\bibitem [{\citenamefont {Grimm}\ \emph {et~al.}(2000)\citenamefont {Grimm},
		\citenamefont {Weidemuller},\ and\ \citenamefont {Ovchinnikov}}]{Grimm2000}%
	\BibitemOpen
	\bibfield  {author} {\bibinfo {author} {\bibfnamefont {R.}~\bibnamefont
			{Grimm}}, \bibinfo {author} {\bibfnamefont {M.}~\bibnamefont {Weidemuller}},
		\ and\ \bibinfo {author} {\bibfnamefont {Y.}~\bibnamefont {Ovchinnikov}},\
	}in\ \href {\doibase 10.1016/S1049-250X(08)60186-X} {\emph {\bibinfo
			{booktitle} {Advances in atomic molecular, and optical physics}}},\
	Vol.~\bibinfo {volume} {{42}}\ (\bibinfo  {publisher} {Elsevier},\ \bibinfo
	{year} {2000})\ pp.\ \bibinfo {pages} {{95--170}}\BibitemShut {NoStop}%
\end{thebibliography}

%

\end{document}